  \providecommand\BibTeX{{%
    \normalfont B\kern-0.5em{\scshape i\kern-0.25em b}\kern-0.8em\TeX}}}
\newcommand{\CT}{\textsf{CT}}
\newcommand{\JC}{\textsf{Join}}
\renewcommand{\P}{\mathbb{P}}
\newcommand{\Paragraph}[1]{\vspace{0.8 mm} \noindent {\bf #1}}
\definecolor{purple}{rgb}{0.74, 0.2, 0.64}
\newtheorem{theorem}{Theorem}[section]
\newtheorem{corollary}{Corollary}[theorem]
\newtheorem{lemma}[theorem]{Lemma}
\newcommand{\POB}{{\textsf{POB}}}
\newcommand{\ExactAlgName}[0]{\textcolor{black}{OCAP}}
\newcommand{\ApprAlgNameAbbr}[0]{\textcolor{black}{NOCAP}}
\begin{document}

\title{NOCAP: Near-Optimal Correlation-Aware Partitioning Joins}

\author{Zichen Zhu}
\email{zczhu@bu.edu}
\affiliation{%
\institution{Boston University}
 \city{Boston}
 \state{MA}
 \country{USA}
}

\author{Xiao Hu}
\email{xiaohu@uwaterloo.ca }
\affiliation{%
 \institution{University of Waterloo}
 \city{Waterloo}
 \state{Ontario}
 \country{Canada}
}

\author{Manos Athanassoulis}
\email{mathan@bu.edu}
\affiliation{%
 \institution{Boston University}
 \city{Boston}
 \state{MA}
 \country{USA}
}

\renewcommand{\shortauthors}{Zichen Zhu, Xiao Hu, and Manos Athanassoulis}

\begin{CCSXML}
<ccs2012>
<concept>
<concept_id>10002951.10002952.10003190.10003192.10003426</concept_id>
<concept_desc>Information systems~Join algorithms</concept_desc>
<concept_significance>500</concept_significance>
</concept>
<concept>
<concept_id>10003752.10010070.10010111.10011711</concept_id>
<concept_desc>Theory of computation~Database query processing and optimization (theory)</concept_desc>
<concept_significance>300</concept_significance>
</concept>
</ccs2012>
\end{CCSXML}

\ccsdesc[500]{Information systems~Join algorithms}
\ccsdesc[300]{Theory of computation~Database query processing and optimization (theory)}

\keywords{Storage-based Join, Partitioning Join, Dynamic Hybrid Hash Join}


\begin{abstract}
Storage-based joins are still commonly used today because the memory budget does not always scale with the data size. One of the many join algorithms developed that has been widely deployed and proven to be efficient is the Hybrid Hash Join (HHJ), which is designed to exploit any available memory to maximize the data that is joined directly in memory. However, HHJ cannot fully exploit detailed knowledge of the join attribute correlation distribution. 

In this paper, we show that given a correlation skew in the join attributes, HHJ partitions data in a suboptimal way. To do that, we derive the optimal partitioning using a new cost-based analysis of partitioning-based joins that is tailored for primary key - foreign key (PK-FK) joins, one of the most common join types.
This optimal partitioning strategy has a high memory cost, thus, we further derive an approximate algorithm that has tunable memory cost and leads to near-optimal results. Our algorithm, termed NOCAP (Near-Optimal Correlation-Aware Partitioning) join, outperforms the state-of-the-art for skewed correlations by up to $30\%$, and the textbook Grace Hash Join by up to $4\times$. Further, for a limited memory budget, NOCAP outperforms HHJ by up to 10\%, even for uniform correlation. Overall, NOCAP dominates state-of-the-art algorithms and mimics the best algorithm for a memory budget varying from below $\sqrt{\|\text{relation}\|}$ to more than $\|\text{relation}\|$.

\end{abstract}

\maketitle

\section{Introduction}
Joins are ubiquitous in database management systems (DBMS). Further, \emph{primary key - foreign key} (PK-FK) equi-joins are the most common type of joins. For example, all the queries of industry-grade benchmarks like TPC-H~\cite{TPCH} and most of Join Order Benchmark (JOB)~\cite{Leis2015} are PK-FK equi-joins. Recent research has focused on optimizing in-memory equi-joins~\cite{Boncz1999,Blanas2011,Wassenberg2011,Balkesen2013,Barber2014,Liu2015,Schuh2016,Bandle2021}, however, as the memory prices scale slower than storage \cite{McCallum2022}, the available memory might not always be sufficient to store both tables simultaneously, thus requiring a classical storage-based join \cite{Ramakrishnan2002}. 
This is common in a shared resource setting, like multiple colocated databases or virtual database instances deployed on the same physical cloud server~\cite{Jennings2015,Buyya2019,Berenberg2021}.
Besides, in edge computing,
memory is also limited, which is further exacerbated when other memory-demanding services are running~\cite{He2020}.
In several other data-intensive use cases like Internet-of-things, 5G communications, and autonomous vehicles~\cite{Cisco2018,Gartner2017}, memory might also be constrained.
Finally, there are two main reasons a workload might need to use storage-based joins: (1) workloads consisting of queries with multiple joins, and (2) workloads with a high number of concurrent queries. In both cases, the available memory and computing resources must be shared among all concurrently executed join operators.
\begin{figure}[t]
\centering
    \includegraphics[width=0.65\linewidth]{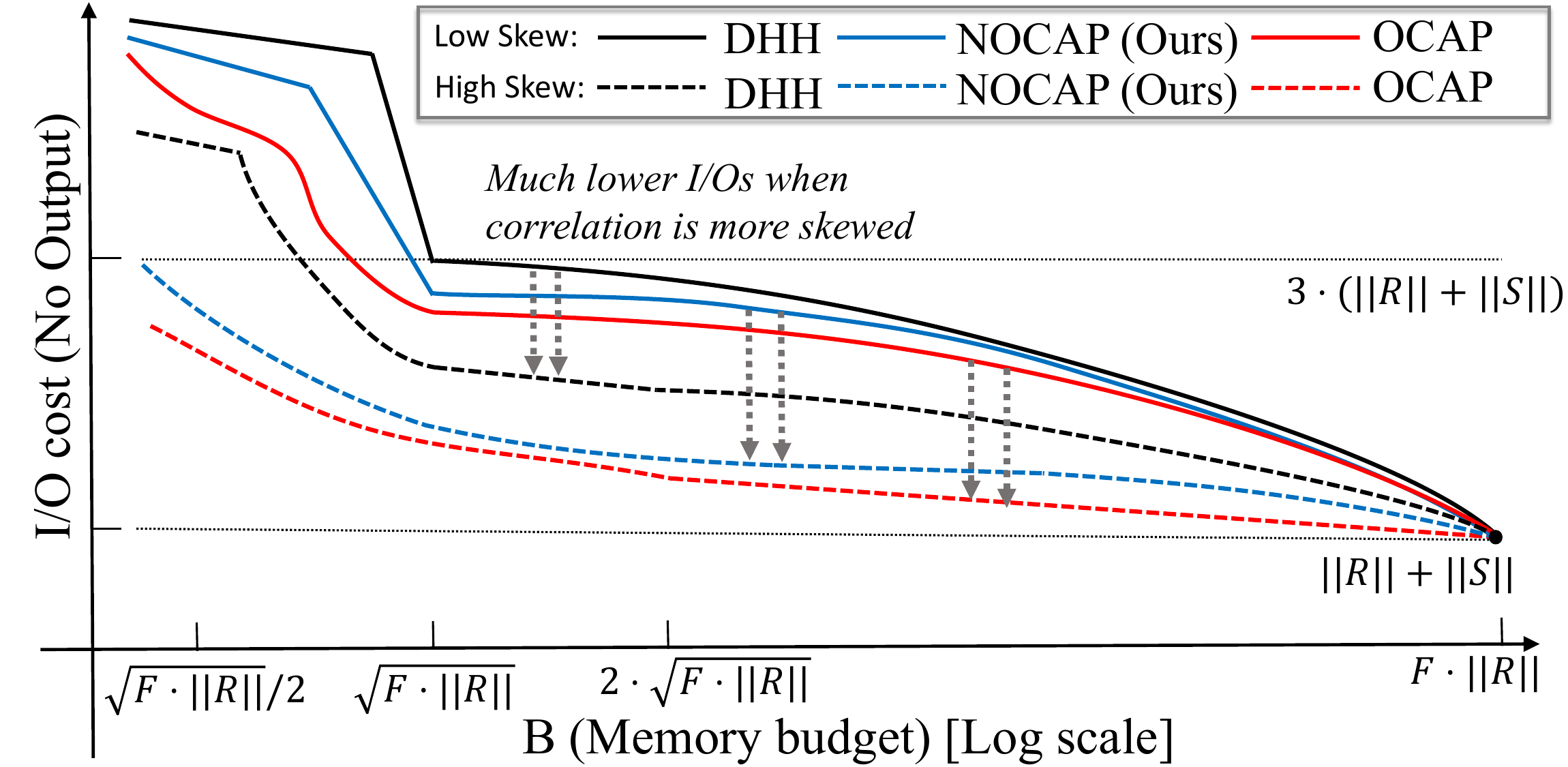}
    \label{fig:ghj-vs-smj-vs-ours-intro-uni}
        \vspace{-0.1in}
    \caption{A conceptual graph on the I/O cost of the state-of-the-art DHH, OCAP and NOCAP, assuming 
    $\|R\| \le \|S\|$. $F$ is the fudge factor indicating the space amplification factor between the in-memory hash table and the stored raw data).}
\label{fig:ghj-vs-smj-vs-ours-intro-1}
\end{figure}

\Paragraph{Storage-based Joins.} 
When executing a join, storage-based join algorithms are used when the available memory is not enough to hold the hash table for the smaller relation. 
Traditional storage-based join algorithms include Nested Block Join (NBJ), Sort Merge Join (SMJ), Grace Hash Join (GHJ), Simple Hash Join (SMJ), and Hybrid Hash Join (HHJ). 
Overall, HHJ, which acts as a blend of Simple Hash Join and Grace Hash Join, is considered the state-of-the-art approach~\cite{Kitsuregawa1983,DeWitt1984,Shapiro1986}, and is extensively used in existing database engines~(e.g., MySQL~\cite{MySQL}, PostgreSQL~\cite{PostgreSQL}, AsterixDB~\cite{Alsubaiee2014}).
HHJ \textbf{uniformly} distributes records from the two input relations to a number of partitions via hashing the join key and ensures that, when possible, one or more partitions remain in memory and are joined on the fly without writing them to disk. The remaining disk-resident partitions are joined using a classical hash join approach to produce the final result.
Unlike GHJ and SMJ, HHJ uses the available memory budget judiciously and thus achieves a lower I/O cost.
Existing relational database engines (e.g., PostgreSQL~\cite{PostgreSQL} and 
AsterixDB~\cite{Alsubaiee2014,Kim2020})
often implement a variant of HHJ, Dynamic Hybrid Hash join (DHH) that dynamically decides which partitions should stay in memory during partitioning.

\Paragraph{The Challenge: Exploit Skew.} For PK-FK joins, we describe the number of matching keys per PK using a \textit{distribution}, which we also refer to as \textit{join attribute correlation}, or simply \textit{join correlation}. In turn, join correlation can be uniform (all PKs have the same number of FK matches) or skewed (some PKs have more matches than others). Although there are many skew optimizations for joins~\cite{Hua1991,Hua1991a,Kitsuregawa1989,Nakayama1988,Li2002}, a potential detailed knowledge of the join attribute correlation and its skew is not fully exploited since existing techniques use heuristic rules to design the caching and partitioning strategy.
As such, while these heuristics may work well in some scenarios, in general, they lead to suboptimal I/O cost under an arbitrary join correlation. For example, HHJ can be optimized by prioritizing entries with high-frequency keys when building the in-memory hash table~\cite{PostgreSQLa}.
However, practical deployments use limited information about the join attribute correlation and typically employ a fixed threshold for building an in-memory hash table for keys with high skew (e.g., 2\% of available memory). As a result, prior work does not systematically quantify the benefit of such approaches, nor does it offer a detailed analysis of how close to optimal they might be.
In fact, due to the exponential search space, there is no previous literature that accurately reveals the relationship between the optimal I/O cost and the join correlation, thus leaving a large unexplored space for studying the caching and partitioning strategy.

\Paragraph{Key Idea: Optimal Correlation-Aware Partitioning.} To study the optimality of different partitioning-based join algorithms, we model a general partitioning strategy that allows for arbitrary partitioning schemes (not necessarily based on a specific hash function). 
We assume that the relations we join have a PK-FK relationship, and we develop an algorithm for \textit{Optimal Correlation-Aware Partitioning} (\ExactAlgName) that allows us to compare any partitioning strategy with the optimal partitioning, given the join correlation. Our analysis reveals that the state of the art is suboptimal in the entire spectrum of available memory budget (varied from $\sqrt{\|\text{relation}\|}$ to $\|\text{relation}\|$), leading up to $60\%$ more I/Os than strictly needed as shown in Figure \ref{fig:ghj-vs-smj-vs-ours-intro-1}, where the black lines are the state of the art, and the red lines are the optimal number of I/Os. The \ExactAlgName{} is constructed by modeling the PK-FK join cost as an optimization problem and then proving the \textit{consecutive theorem} which establishes the basis of finding the optimal cost within polynomial time complexity.
We propose a dynamic programming algorithm that finds the optimal solution in quadratic time complexity, $O(n^2\cdot m^2)$, and a set of pruning techniques that further reduce this cost to $O\left({\left(n^2\cdot \log m\right)}/m\right)$ (where $n$ is the number of records of the smaller relation and $m$ is the memory budget in pages). \ExactAlgName{} has a large memory footprint as it assumes that the detailed information of the join attributes correlation is readily available and, thus, can be only applied for offline analysis. 
Further, note that \ExactAlgName{} models the cost using the number of records as a proxy to I/Os, and is used as a building block of a practical algorithm that we discuss next.
We rely on \ExactAlgName{} to identify the headroom for improvement from the state of the art.

\Paragraph{The Solution: Near-Optimal Correlation-Aware Partitioning Join.} In order to build a practical join algorithm with a tunable memory budget, we approximate the optimal partitioning provided by \ExactAlgName{} with \textit{Near-Optimal Correlation-Aware Partitioning} (\ApprAlgNameAbbr{}) algorithm. \ApprAlgNameAbbr{} enforces a strict memory budget and splits the available memory between buffering partitions and caching information regarding skew keys in join correlation.
In Figure~\ref{fig:ghj-vs-smj-vs-ours-intro-1}, we construct a conceptual graph to compare our solution and the state-of-the-art storage-based join methods, such as Dynamic Hybrid Hash (DHH). As shown, DHH cannot fully adapt to the correlation and thus results in higher I/O cost than \ExactAlgName{} when the correlation is skewed. Our approximate algorithm, \ApprAlgNameAbbr{}, uses the same input from DHH and achieves near-optimal I/O cost compared with \ExactAlgName{}.
Further, while DHH is able to exploit the higher skew to reduce its I/O cost, the headroom for improvement for high skew is higher than for low skew, which is largely achieved by our approach. Overall, \ApprAlgNameAbbr{} is a practical join algorithm that is mostly beneficial compared to the state of the art, and offers its maximum benefit for a high skew in the join attribute correlation.

\Paragraph{Contributions.} In summary, our contributions are as follows:
\begin{itemize}[leftmargin=0.15in]  
\item We build a new cost model for partitioning-based PK-FK join algorithms, and propose optimal correlation-aware partitioning (\ExactAlgName{}) based on dynamic programming (\S\ref{subsec:opt_partition_no_caching}), assuming that the join correlation is known. \ExactAlgName{}'s time complexity is $O\left(n^2\cdot \log m / m\right)$ and space complexity is $O(n)$, where $n$ is the input size in tuples and $m$ is the memory budget in pages (\S\ref{subsec:hybrid-join}). 
\item We design an approximate correlation-aware partitioning (\ApprAlgNameAbbr{}) algorithm based on \ExactAlgName{}.
\ApprAlgNameAbbr{} uses partial correlation information (the same information used by the state-of-the-art skew optimized join algorithms) and achieves near-optimal performance while respecting memory budget constraints (\S\ref{sec:practical-partitioning-algorithm}).
\item We thoroughly examine the performance of our algorithm by comparing it against GHJ, SMJ, and DHH. We identify that the headroom for improvement is much higher for skewed join correlations by comparing the I/O cost of DHH against the optimal. Further, we show that \ApprAlgNameAbbr{} can reach near-optimal I/O cost and thus lower latency under different correlation skew and memory budget, compared to the state of the art (\S\ref{sec:exp}). Overall, \ApprAlgNameAbbr{} dominates the state-of-the-art and offers performance benefits of up to 30\% when compared against DHH and up to $4\times$ when compared against the textbook GHJ.
\end{itemize}

\section{Previous Storage-based Joins}
We first review four classical storage-based join methods~\cite{DeWitt1984,Haas1997,Jahangiri2022,Ramakrishnan2002} (see Table~\ref{tab:cost_model}), and then present more details of Dynamic Hybrid Hash (DHH), a variant of the state-of-the-art HHJ. 
\subsection{Classical Storage-based Joins}
\label{subsec:classical_joins}
\Paragraph{Nested Block Join (NBJ).} NBJ partially loads the smaller relation $R$ (in \textbf{chunks} equal to the available memory) in the form of an in-memory hash table and then scans the larger relation $S$ once \emph{per chunk} to produce the join output for the partial data. 
This process is repeated multiple times for the smaller relation until the entire relation is scanned.
As such, the larger relation is scanned for as many times as the number of chunks in the smaller relation.

\Paragraph{Sort Merge Join (SMJ).} SMJ works by sorting both input tables by the join attribute using external sorting and applying $M$-way ($M\leq B-1$) merge sort to produce the join result.
During the external sorting process, if the number of total runs is less than $B-1$, the last sort phase can be combined together with the multi-way join phase to avoid repetitive reads and writes \cite{Ramakrishnan2002}.

\Paragraph{Grace Hash Join (GHJ).} GHJ uniformly distributes records from the two input relations to a number of partitions (at most $B-1$) via hashing the join key, and the corresponding partitions are joined then. Specifically, when the smaller partition fits in memory, we simply store it in memory (typically as a hash table) and then scan the larger partition to produce the output .

\begin{table}[h]
\caption{Estimated Cost for NBJ, GHJ, and SMJ for $R \Join S$. $\|R\|$ and $\|S\|$ are the number of pages of $R$ and $S$. Assume $\|R\| \le \|S\|$. \#chunks is the number of passes for scanning $S$. 
\#pa-runs is the number of times to partition $R$ and $S$ until the smaller partition fits in memory.
\#s-passes is the number of partially sorted passes of $R$ and $S$, until the number of the total sorted runs is $\le B-1$ ($B$ is the total number of memory available). $\mu$ and $\tau$ indicate the write/read asymmetry, where $RW$, $SW$, and $SR$ denote the latency per I/O for random write, sequential write and sequential read respectively.}
\vspace{-0.1in}
\begin{tabular}{ccc}
	\toprule
	Approach & Normalized \#I/O & Notation \\
	\midrule
	$\textsf{NBJ}(R,S)$ & $ \|R\|+ \#\text{chunks}\cdot \|S\| $ & None\\
        $\textsf{GHJ}(R,S)$ & $\left(1 + \#\text{pa-runs}\cdot \left(1+\mu\right)\right)\cdot \left(\|R\|+\|S\|\right)$ & $\mu \stackrel{\text{def}}{=} RW/SR$ \\
	  $\textsf{SMJ}(R,S)$ & $\left(1 + \#\text{s-passes}\cdot \left(1+\tau\right)\right)\cdot \left(\|R\|+\|S\|\right)$ & $\tau \stackrel{\text{def}}{=} SW/SR$\\
	\bottomrule
\end{tabular}
	\label{tab:cost_model}	
\end{table}

\Paragraph{Hybrid Hash Join (HHJ).} HHJ is a variant of GHJ that allows one or more partitions to stay in memory without being spilled to disk, if the space is sufficient. When partitioning the second relation, we can directly probe in-memory partitions and generate the join output for those in-memory partitions.
The remaining keys of the second relation are partitioned to disk, and we execute the same probing phase as in GHJ to join the on-disk partitions.
When the memory budget is lower than $\sqrt{\|R\|\cdot F}$ (where $\|R\|$ is the size of the smaller relation in pages, and $F$ is the fudge factor that stands for the space amplification factor between the in-memory hash table and the stored raw data), HHJ downgrades into GHJ because it will not be feasible to maintain a partition in memory while having enough buffers for the remaining partitions and the output.

\begin{figure}[t]
\includegraphics[width=0.75\textwidth]{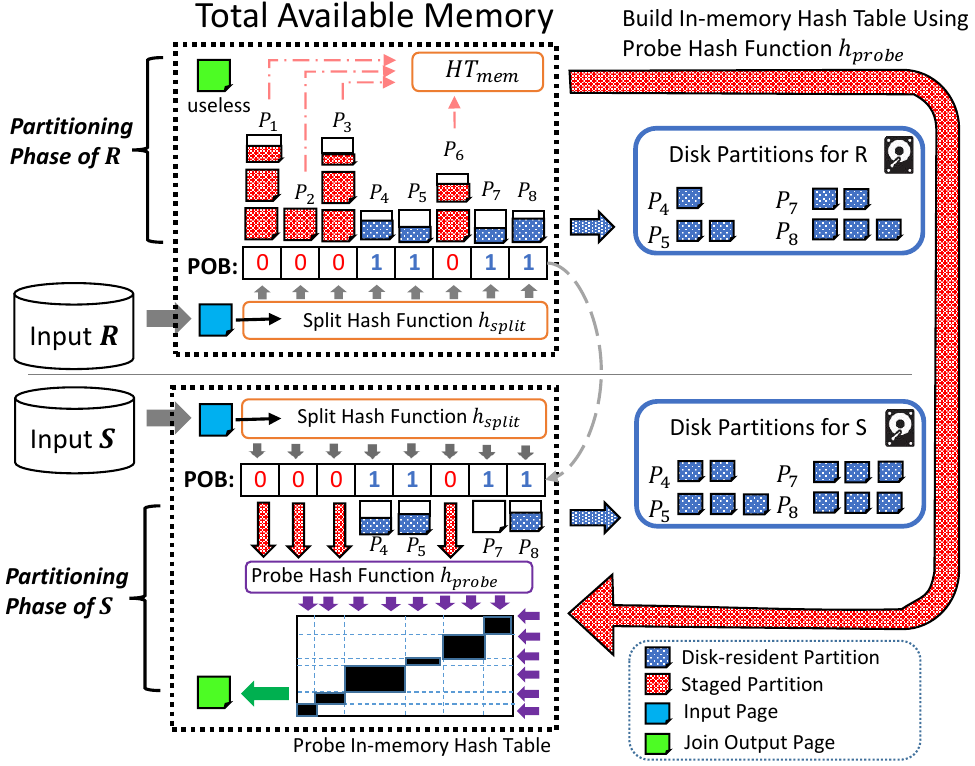}
\caption{An illustration of partitioning in DHH.
Partitions in red are staged in memory when $R$ is being partitioned, while the remaining ones in blue are written to disk.}
\label{fig:dynamic_hybrid_join_intro}
\end{figure}

\begin{algorithm}[t]
\caption{$\textsc{DHH-Primary}(R, HS_{\textsf{mem}}, B)$}\label{alg:DHH_alg_R}
$m_{\textsf{DHH}} \gets \max\left\{20, \left\lceil \frac{\|R\|\cdot F - B}{B-1}\right\rceil\right\}$\;
Initialize a length-$m_{\textsf{DHH}}$ array $\POB$ as $\{0,0,...,0\}$\;
Initialize a length-$m_{\textsf{DHH}}$ partitioning $\textsf{PartPages}$\;
\ForEach{$r \in R$}{ 
    $\textsf{PartID} \gets h_{\textsf{split}}(r.\textsf{key})$\;
    \If{{\normalfont $\POB[{\normalfont \textsf{PartID}}] = 0$}}{
        \If{{\normalfont $\sum\limits_{i=1}^{m_{\textsf{DHH}}} \textsf{PartPages}[i].\textsf{size}() = B - 2 - m_{\textsf{DHH}}$}}{
            Flush an arbitrary partition $j$ to disk\;
            {\normalfont $\POB[j]=1$}\;
        } 
        \lElse{append $r$ to $\textsf{PartPages}[\textsf{PartID}]$} 
    }
    \lElse{append $r$ to the in-disk partition $\textsf{PartID}$}     
}
\lForEach{{\normalfont $r \in \textsf{PartPages}$}}{add $r$ to $HT_{\textsf{mem}}[h_\textsf{probe}(r.\textsf{key})]$}
\Return $HT_{\textsf{mem}}, \POB$\;
\end{algorithm}
\begin{algorithm}[t]
\caption{$\textsc{DHH-Foreign}(S, HT_{\textsf{mem}}, \POB$)}\label{alg:DHH_alg_S}
\ForEach{$s \in S$}{ 
    \If{{\normalfont $h_\textsf{probe}(s.\textsf{key}) \in HT_{\textsf{mem}}$}}{
        \If{\normalfont $s.\textsf{key}$ is in records $HT_{\textsf{mem}}[h_\textsf{probe}(s.\textsf{key})]$}{
          join $s$ with matched records and output\;
        }
        
    } \Else{
        \If{{\normalfont $\POB[h_{\textsf{split}}(s.\textsf{key})] = 1$}} {
            append $s$ to the in-disk partition 
        $\textsf{PartID}$\;
        }
    }   
}
\end{algorithm}

\subsection{Dynamic Hybrid Hash Join}
\label{subsec:dynamic_hhj}
Here we review DHH, the state-of-the-art implementation of HHJ, as well as its skew optimization. While there are many different implementations of DHH~\cite{Jahangiri2022}, the core idea is the same across all of them: decide which partitions to stay in memory or on disk.

\noindent {\bf Framework.}
We visualize one implementation  (AsterixDB~\cite{Alsubaiee2014,Kim2020}) of DHH in Figure~\ref{fig:dynamic_hybrid_join_intro}.
Compared to HHJ, DHH exhibits higher robustness by dynamically deciding which partitions should be spilled to disk.
Specifically, every partition is initially staged in memory, and when needed, a staged partition will be selected to be written to disk, and it will free up its pages for new incoming records.
Typically, the largest partition will be selected, however, the victim selection policy may vary in different systems.
A bit vector (``Page Out Bits'', short as POB, in Figure~\ref{fig:dynamic_hybrid_join_intro}) is maintained to record which partitions have been written out at the end of the partitioning phase of $R$. Another hash function is applied to build a large in-memory hash table using all staged partitions. 
It is worth noting that, when no partitions can be staged in the memory (i.e., all the partitions are spilled out to disk), DHH naturally \textbf{downgrades} into GHJ.
When partitioning the relation $S$, DHH applies the partitioning hash function first and checks POB for every record. 
If it is set, DHH directly probes the in-memory hash table and generate the join result. 
Otherwise, DHH moves the joined record to the corresponding output page and flushes it to the disk if it is full. 
Finally, DHH performs exactly the same steps as the traditional hash join to probe all the disk-resident partitions. The partitioning and probing of DHH is presented in detail in Algorithms \ref{alg:DHH_alg_R} and \ref{alg:DHH_alg_S}.

\Paragraph{Number of Partitions.} The number of partitions in DHH (noted by $m_{\textsf{DHH}}$) is a key parameter that largely determines join performance.
A large $m_{\textsf{DHH}}$ requires more output buffer pages reserved for each partition, and, thus, less memory is left for memory-resident partitions.
On the other hand, small $m_{\textsf{DHH}}$ renders the size of each partition very large, which makes it harder to stage in memory.
No previous work has formally investigated the optimal number of partitions for arbitrary correlation skew, but past work on HHJ employs some heuristic rules.
For example, an experimental study~\cite{Jahangiri2022} recommends that 20 is the minimum number of partitions when we do not have sufficient information about the join correlation.
In addition, if we restrict the number of memory-resident partitions to be 1, we can set the number of partitions $m_{\textsf{DHH}}$ as
$m_{\textsf{DHH}} = \left\lceil \frac{\|R\|\cdot F - B}{B-1}\right\rceil$~\cite{Shapiro1986},
where $\|R\|$ is the size of relation $R$ in pages, $F$ is the fudge factor for the hash table, and $B$ is the total given memory budget in pages.
Integrating this equation with the above tuning guideline, we configure $m_{\textsf{DHH}}=\max\left\{\left\lceil \frac{\|R\|\cdot F - B}{B-1}\right\rceil, 20\right\}$.

\Paragraph{Heuristic Skew Optimization In Practical Deployments.}
For ease of notation, we assume that $R$ is the relation that holds the primary key and is smaller in size (dimension table) and that $S$ is the relation that holds the foreign key and is larger in size (fact table).
In PK-FK joins, the join correlation can be non-uniform. Histojoin~\cite{Cutt2009} proposes to cache in memory the Most Common Values (MCVs, maintained by DBMS) to help reducing the I/O cost.
Specifically, if the memory budget is enough, Histojoin first partitions the dimension table $R$ and caches entries with high-frequency keys in a small hash table in memory.
When it partitions the fact table $S$, if there are any keys matched in the small hash table, it directly joins them and output.
That way, Histojoin avoids writing out many entries with the same key from the fact table and reading them back into memory during the probing phase.
However, it is unclear how much frequency should be treated as \textit{high} to qualify for this optimization and how large the hash table for the skewed keys should be within the available memory budget. 
In one implementation of Histojoin~\cite{Cutt2008}, it limits the memory usage to 2\% for constructing the hash table specifically for skewed keys. This restriction ensures that the specific hash table fits in memory.
Likewise, in PostgreSQL~\cite{PostgreSQLa}, when the skew optimization knob is enabled and MCVs are present in the system cache (thus not occupying the user-specified memory), PostgreSQL follows a similar approach by adding an extra trigger to build the hash table for skewed keys.
By default, PostgreSQL also allocates a fixed budget of 2\% memory for the hash table dedicated to handling skewed keys, but this allocation only occurs when the total frequency of skewed keys is more than $0.01\cdot n_S$ ($n_S$ is the number of keys in relation $S$).
In fact, PostgreSQL implements a general version of Histojoin by specifying the frequency threshold.
However, existing skew optimizations in both PostgreSQL and Histojoin rely on fixed thresholds (i.e., memory budget for skew keys, and frequency threshold), which apparently cannot be adapted to different memory budget and different join skew. For example, as shown in Figure \ref{fig:ghj-vs-smj-vs-ours-intro-1}, state-of-the-art DHH are not enough to achieve ideal I/O cost, and they are further away from the optimal for higher skew.

\section{Optimal Correlation-Aware Partitioning Join}
\label{sec:correlation-aware_join}
\begin{table}[t]
\caption{Summary of our notation.}
\vspace{-0.1in}
\begin{tabular}{cc}
	\toprule
	Notation & Definitions (Explanations) \\
	\midrule
	 $\|R\|$ $(\|S\|)$ & \#pages of relation $R$ ($S$) \\
  $n_R$ $(n_S)$ & \#records from relation $R$ ($S$) \\
  $P_j$ & the $j$-th partition ($P_j = R_j$)\\
  $|P_j|$ & \#records in $j$-th partition \\
  $\|R_j\|$ $(\|S_j\|)$ & \#pages of a partition $R_j$ ($S_j$) \\
     $\P$ & a Boolean matrix for partitioning assignation \\
     $f: i \rightarrow j$ & returns $j$ for $i$ so that $\P_{i,j}=1$  \\
     $\mathcal{N}_f(i)$ & the partition where the $i$-th records belongs\\
     $n$ & $n = n_R$ in the partitioning context \\
     $m$ & \#partitions \\
     
     $B$ & \#pages of the total available buffer \\
     $F$ & the fudge factor of the hash table \\
     $\CT$ & the correlation table of keys and their frequency \\
    $b_R$ ($b_S$) & \#records from relation $R$ ($S$) per page \\
    $c_R$ ($c_S$) & \#records from relation $R$ ($S$) per chunk in $\textsf{NBJ}$ \\   
	\bottomrule
\end{tabular}
\label{tab:notation}
\end{table} 

We now discuss the theoretical limit of an I/O-optimal partitioning-based join algorithm when the correlation skewness information is known in advance and can be accessed for free. We assume that $R$ is the relation that holds the primary key (dimension table) and $S$ is the relation that holds the foreign key (fact table). More specifically, we model the correlation between two input relations with respect to the join attributes as a \emph{correlation table} $\CT$, where $\CT[i]$ is the number of records in $S$ matching the $i$-th record in $R$. Our goal is to find which keys should be cached in memory, and how the rest of the keys should be partitioned in order to minimize the total I/Os of the join execution with an arbitrary memory budget.
 
We start with an easy case when no records can be cached during the partition phase (\S\ref{subsec:opt_partition_no_caching}) 
and turn to the general case when keys can be cached (\S\ref{subsec:hybrid-join}).
We note again that the memory used for storing $\CT$ as well as the optimal partitioning does not compete with the available memory budget $B$, which is indeed \textbf{unrealistic}. This is why we consider it as a theoretically I/O-optimal algorithm, since it can help us understand the lower bound of any feasible partitioning but cannot be converted into a practical algorithm. More proofs and details can be found in the appendix.

\subsection{Optimal Partitioning Without Caching}
\label{subsec:opt_partition_no_caching}
We first build the cost model for partitioning assuming no cached records during the partitioning (\S\ref{subsec:modeling_io}), then present the main theorem for characterizing the optimal partitioning (\S\ref{subsec:consecutive_theorem}) and find the optimal partitioning efficiently via dynamic programming (\S\ref{subsec:dynamic_programming}).

\subsubsection{Partitioning as An Integer Program}
\label{subsec:modeling_io}

A partitioning $\P$ is an assignment of $n$ records from $R$ to $m$ partitions. As we use at least one page to stream the input relation and the rest of the pages as output buffers for each partition, $m \leq B-1$. More specific constraints on $m$ will be determined in Section~\ref{subsec:hybrid-join} and Section~\ref{subsec:approximate_algorithm}. 
Once we know how to partition $R$ (the relation with primary key), we can apply the same partition strategy to $S$.

\Paragraph{Partitioning.} We encode a partitioning $\P$ as a Boolean matrix of size $n \times m$, where $\P_{i,j}=1$ indicates that the $i$-th record belongs to the $j$-th partition. 
If the $i$-th record from $R$ does not have a match in $S$, i.e., $\CT[i]=0$, we will not assign it to any partition. We can preprocess the input to filter out these records so that each of remaining records belongs to exactly one partition, although it is unrealistic in practice.
In this way, a partitioning $\P$ can be compressed as a mapping function $f: [n] \to [m]$ from the index of record to the index of partition in $\P$, such that $i$-th record from relation $R$ is assigned to $f(i)$-th partition in $\P$. 
 Let $P_j = \{i \in [n]: f(i)=j \}$ be the set of indexes of records from $R$ assigned to the $j$-th partition in $\P$. We can then derive the number of pages for $R_j$ and $S_j$ as $\|R_j\|= \left\lceil |P_j|/b_R \right\rceil$, $\|S_j\|= \lceil \sum_{i \in P_j}\CT[i]/b_S \rceil$, where $b_R$ (resp. $b_S$) is the number of records per page for $R$ (resp. $S$).  Moreover, we use $\mathcal{N}_f(i) =\{i' \in [n]: f(i') = f(i)\}$ to denote the records from $S$ that fall into the same partition with the $i$-th record in $R$.
We summarize frequently-used notations in Table~\ref{tab:notation}.

\Paragraph{Cost Function.} Given a partitioning $\mathbb{P}$, it remains to join partitions in a pair-wise manner.
Reusing the cost model in Table~\ref{tab:cost_model}, the cost of computing the join results for a pair of partitions $(R_j, S_j)$ is $\min \{\textsf{NBJ}(R_j,S_j), \textsf{SMJ}(R_j,S_j), \textsf{GHJ}(R_j,S_j)\}$, by picking the cheapest one among \textsf{NBJ}, \textsf{SMJ}, and \textsf{GHJ}. 
We omit \textsf{DHH} here for simplicity, which does not lead to major changes of the optimal partitioning, since in most cases \textsf{NBJ} is the most efficient one when taking read/write asymmetry into consideration (write is generally much slower than read in modern SSDs~\cite{Papon2021,Papon2021a}).
For ease of illustration, we assume only \textsf{NBJ} is always applied in the pair-wise join, i.e., the \emph{join cost} induced by partitioning $\P$ is $\JC(\P,m)= \sum_{j=1}^{m}\textsf{NBJ}(R_j, S_j)$. It can be proved that most of the following results still apply when alll of \textsf{NBJ}, \textsf{SMJ}, and \textsf{GHJ} can be used for the partition-wise joins as long as we always pick the one with minimum cost.

Recall the cost function of $\textsf{NBJ}$ in Table~\ref{tab:cost_model}. For simplicity, we assume $\|R_j\| \le \|S_j\|$ for each $j \in [m]$. The other case with $\|R_j\| \ge \|S_j\|$ can be discussed similarly. By
setting \#chunks$= \lceil \|R_j\|/((B-2)/F) \rceil$, $\|R_j\| = |P_j|/b_R$ and $\|S_j\|= \lceil \sum_{i \in P_j}\CT[i]/b_S \rceil$, we obtain the cost function as follows:
\begin{equation*}
\begin{split}
\JC(\P,m) 
=\|R\| + \sum\limits_{j=1}^{m} \left\lceil \frac{|P_j|}{c_R} \right\rceil  \cdot 
\left\lceil \sum\limits_{i \in P_j}\frac{\CT[i]}{b_S}\right\rceil
&\le \|R\| + m + \frac{n_R}{c_R} + \sum\limits_{j=1}^{m} \left\lceil \frac{|P_j|}{c_R} \right\rceil  \cdot 
\sum\limits_{i \in P_j}\frac{\CT[i]}{b_S}\\
& = \|R\| + m + \frac{n_R}{c_R} + \frac{1}{b_S} \cdot \sum\limits_{i=1}^{n}\CT[i] \cdot \left\lceil \frac{|\mathcal{N}_f(i)|}{c_R} \right\rceil
\end{split}
\label{eq:join_cost}
\end{equation*}
where $c_R=\lfloor b_R\cdot (B-2)/F\rfloor$ is a constant that denotes the number of records per chunk in relation $R$, and $\mathbb{I}$ is the indicator function. 
Note that the I/O cost of read and write both relations is the same for all partitioning strategies. 

\Paragraph{Integer Program.} Given the number of partitions $m$, finding a partitioning $\P$ for $R$ (containing the primary key) of $n$ records with minimum cost can be captured by the following integer program: 
\begin{align*}
\min_{\P} & \quad  \JC(\P,m)\\
\textrm{subject to} & \quad \sum\limits_{j=1}^{m}\P_{i,j} = 1, \forall i \in [n] \\
& \quad m + 1 \le B \\
& \quad \P_{i,j} \in \{0,1\}, \forall i \in [n], \forall j \in [m]
\end{align*}
However, solving such an integer program takes exponentially large time by exhaustively searching all possible partitionings. We next characterize the optimal partitioning with some nice properties. 

 \subsubsection{Main Theorem}
 \label{subsec:consecutive_theorem}
 
For simplicity, we assume that $\CT$ is \emph{sorted in ascending order}, i.e., if $i_1 \le i_2$, then $\CT[i_1] \le \CT[i_2]$. With sorted $\CT$, we present our main theorem for characterizing the optimal partitioning $\P$ with three critical properties. Firstly, each partition of $\P$ contains consecutive records lying on the sorted \CT~ array. 
Then, we can define a partial ordering on the partition in $\P$. Secondly, the sizes of all partitions are in a weakly descending order. At last, the size of each partition except the first one (or the largest one) is divisible by $c_R$. 
\begin{theorem}[Main Theorem]
\label{the:main}
Given an arbitrary sorted {\normalfont \CT} array, 
there is an optimal partitioning $\P = \langle P_1,P_2,\cdots,P_m \rangle$ such that 
\begin{itemize}[leftmargin=*]
    \item {\bf (consecutive property)} for any $i_1 \leq i_2$, if $i_1 \in P_j$ and $i_2 \in P_j$, then $i \in P_j$ holds for any $i \in [i_1, i_2]$;
    \item {\bf (weakly-ordered property) $\left\lceil \frac{|P_1|}{c_R} \right\rceil \ge \left\lceil \frac{|P_2|} {c_R} \right\rceil  \ge \cdots \ge \left\lceil \frac{ |P_m|}{c_R} \right\rceil$};
    
    \item {\bf (divisible property)} for any $2\le i \le m$, $|P_i|$ is divisible by $c_R$.
\end{itemize}
where $P_j$ is the set of records assigned to the $j$-th partition of $\P$, i.e., $P_j = \{i \in [n]: f(i) =j\}$.
\end{theorem}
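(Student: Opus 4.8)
The plan is to start from an arbitrary optimal partitioning and massage it, through a sequence of local moves that never increase the cost, into one satisfying all three properties at once. Throughout I use the objective in the form derived just above the statement: dropping the additive terms $\|R\|+m+n_R/c_R$ (which are constants once $m$ is fixed) and the positive factor $1/b_S$, minimizing $\JC(\P,m)$ amounts to minimizing
\[
G(\P)\;=\;\sum_{j=1}^{m}\left\lceil \frac{|P_j|}{c_R}\right\rceil\cdot W_j,\qquad W_j:=\sum_{i\in P_j}\CT[i],
\]
so that part $j$ contributes (number of chunks of $R_j$) $\times$ (weight of $S_j$). The single computation that powers every step is this: if $i\in P_j$ and $i'\in P_k$ with $j\ne k$, swapping $i$ and $i'$ leaves all part sizes, hence all chunk counts $t_\ell:=\lceil|P_\ell|/c_R\rceil$, unchanged and changes $G$ by exactly $(t_j-t_k)\bigl(\CT[i']-\CT[i]\bigr)$; moving one record from a part to an adjacent part changes each affected $W$ by that record's $\CT$ value and each affected chunk count by at most one. (I work with the relaxed objective $G$; the exact cost, with $\lceil W_j/b_S\rceil$ in place of $W_j/b_S$, is handled by the same exchanges with a little extra care around the rounding.)

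First I would obtain the consecutive and weakly-ordered properties together. Call a pair of records an \emph{inversion} if they lie in parts $P_j,P_k$ with $t_j>t_k$ but the record in $P_j$ has the larger $\CT$ value. By the swap formula, exchanging an inversion changes $G$ by $(t_j-t_k)(\CT[i']-\CT[i])\le 0$, preserving optimality, while the monovariant $\sum_\ell t_\ell\sum_{i\in P_\ell}i$ strictly drops; hence after finitely many swaps no inversion remains. In the resulting partitioning, any two parts with distinct chunk counts are ordered by $\CT$-value with the wider one lying entirely to the left, and any maximal $\CT$-range occupied by parts of a common chunk count $t$ contributes $t$ times its total weight no matter how it is cut, so such a range may be recut into consecutive intervals for free. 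Relabelling parts left to right then yields the consecutive property, and since chunk count is monotone in part size, $t_1\ge t_2\ge\cdots\ge t_m$ — the weakly-ordered property — with $P_1$ the leftmost (smallest-$\CT$) and largest part.

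Next, divisibility. From a consecutive, weakly-ordered optimum, suppose some $P_j$ with $j\ge 2$ has $|P_j|$ not a multiple of $c_R$, i.e.\ its last chunk carries slack. I would move the rightmost record of $P_{j-1}$ into $P_j$ as $P_j$'s new smallest record, which keeps both parts intervals: the slack absorbs it, so $t_j$ is unchanged and $t_{j-1}$ at worst drops by one, so the weak ordering survives, and (the swap formula applied to adjacent parts) $G$ changes by $(t_j-t_{j-1})\CT[\cdot]\le 0$. Iterating — always pushing slack from earlier, wider parts toward later, narrower ones — leaves the only partial chunk in $P_1$; a monovariant such as $\sum_{j\ge 2}t_j$, or simply processing the parts once from left to right, certifies termination. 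Finally, ties in $\CT$ obstruct nothing: records of equal $\CT$ value may be permuted between parts at zero cost, so any configuration reached above can be normalized with equal values grouped, preserving all three properties.

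The step I expect to be the real obstacle is making the divisibility phase airtight: one must guarantee that zeroing the slack of a later part never recreates an inversion, never breaks the ordering, and terminates, and one must handle the degenerate case in which a later part already fits within a single chunk (so there is no multiple-of-$c_R$ size it can take without emptying a neighbour), where a plain record shift must be replaced by a small local merge-and-resplit of two adjacent parts that keeps $m$ fixed. Getting one monovariant to enforce all of this cleanly is where I would concentrate my effort; the rest is a routine exchange argument driven by the single sign computation above.
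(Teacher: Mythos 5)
Your proposal is correct and follows essentially the same route as the paper: your sign computation $(t_j-t_k)\bigl(\CT[i']-\CT[i]\bigr)$ is exactly the paper's Swap Lemma, and the three properties are obtained by the same sequence of cost-non-increasing exchanges (the paper runs a swap pass for consecutiveness, an ordering pass for the weakly-ordered property, then shifts records from $P_{j-1}$ into $P_j$ for divisibility). The difficulty you flag in the divisibility phase is genuine and is precisely what the paper's appendix resolves, by re-invoking the ordering swaps whenever a shift drops $\lceil |P_{j-1}|/c_R\rceil$ below $\lceil |P_j|/c_R\rceil$.
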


\begin{lemma}[Swap Lemma]
\label{lem:swap}
For a partitioning $\P$, if there exists a pair of indexes $i_1,i_2$ with {\normalfont $\CT[i_1] > \CT[i_2]$} and $|\mathcal{N}_f(i_1)| \geq |\mathcal{N}_f(i_2)|$, it is feasible to find a new partitioning $\P'$  by swapping records at $i_1$ and $i_2$, such that {\normalfont $\JC(\P) \ge \JC(\P')$}.
\end{lemma}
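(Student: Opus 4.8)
The plan is to pin down exactly which terms of $\JC(\P,m)$ move when the records at positions $i_1$ and $i_2$ are exchanged, and to check that the net change is nonpositive. Let $a = f(i_1)$ and $b = f(i_2)$ be the partitions holding the two records. If $a = b$ the swap merely permutes two records inside one partition, so every partition cardinality and every per-partition frequency total is unchanged and $\JC(\P') = \JC(\P)$; hence assume $a \ne b$. Let $\P'$ be obtained by moving $i_1$ into partition $b$ and $i_2$ into partition $a$ (equivalently, exchanging the two frequency values while keeping the assignment fixed --- the two descriptions produce the same object). The first point is that $\P'$ is a valid partitioning with the \emph{same} partition sizes as $\P$: $P_a$ drops $i_1$ and picks up $i_2$, $P_b$ drops $i_2$ and picks up $i_1$, and every other $P_j$ is untouched; in particular both partitions stay nonempty, so $m$ is unchanged, and with it $\|R\|$, $n_R/c_R$, and every chunk count $\lceil |P_j|/c_R\rceil$.

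Next I localize the change to the weight-dependent term. Using the relaxed cost expression derived above,
\begin{equation*}
\JC(\P,m) = \|R\| + m + \frac{n_R}{c_R} + \frac{1}{b_S}\sum_{i=1}^{n}\CT[i]\left\lceil\frac{|\mathcal{N}_f(i)|}{c_R}\right\rceil = \|R\| + m + \frac{n_R}{c_R} + \frac{1}{b_S}\sum_{j=1}^{m}\left\lceil\frac{|P_j|}{c_R}\right\rceil\sum_{i\in P_j}\CT[i],
\end{equation*}
the only summands that change are those of $i_1$ and $i_2$: after the swap $|\mathcal{N}_{f'}(i_1)| = |P_b|$ and $|\mathcal{N}_{f'}(i_2)| = |P_a|$, whereas before they equalled $|P_a|$ and $|P_b|$, and $|\mathcal{N}_f(i)|$ is the same for every other $i$. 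Writing $\delta := \CT[i_1] - \CT[i_2]$, this collapses to
\begin{equation*}
b_S\left(\JC(\P') - \JC(\P)\right) = \delta\left(\left\lceil\frac{|P_b|}{c_R}\right\rceil - \left\lceil\frac{|P_a|}{c_R}\right\rceil\right),
\end{equation*}
which is exactly the statement that the swap shifts weight $\delta$ from partition $a$ (chunk count $\lceil|P_a|/c_R\rceil$) to partition $b$ (chunk count $\lceil|P_b|/c_R\rceil$) inside the bilinear objective $\sum_j\lceil|P_j|/c_R\rceil\cdot(\text{total frequency of }P_j)$.

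It then suffices to invoke the two hypotheses: $\CT[i_1] > \CT[i_2]$ gives $\delta > 0$, while $|\mathcal{N}_f(i_1)| = |P_a| \ge |P_b| = |\mathcal{N}_f(i_2)|$ gives $\lceil|P_a|/c_R\rceil \ge \lceil|P_b|/c_R\rceil$; the two factors on the right-hand side therefore have opposite sign and their product is $\le 0$, so $\JC(\P') \le \JC(\P)$. Morally this is the rearrangement inequality --- to minimize $\sum_j\lceil|P_j|/c_R\rceil\cdot(\text{total frequency of }P_j)$ one wants large frequency mass paired with small chunk counts, and a high-frequency record stranded in a many-chunk partition alongside a low-frequency record in a few-chunk partition is precisely the inversion the swap removes.

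The main thing to get right --- and the reason the proof is subtler than it looks --- is the choice of cost function. The argument must be run on the relaxed bound above, which is the quantity actually minimized by the dynamic program, and \emph{not} on the exact cost whose per-partition factor is the ceiling $\lceil(\sum_{i\in P_j}\CT[i])/b_S\rceil$: for the exact cost a swap with small $\delta$ can leave $\|S_a\|$ unchanged yet push $\|S_b\|$ up by one, and the inequality then fails. A secondary, minor point is that the simplifying regime $\|R_j\| \le \|S_j\|$ under which the relaxed expression was derived survives the swap, since cardinalities are fixed and the total frequency of $P_b$ only increases. Past those two observations, the proof is the elementary sign check above --- the real content is in isolating the right invariants (fixed sizes, hence fixed chunk counts) so that only the bilinear term moves.
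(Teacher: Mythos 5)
Your proof is correct and is essentially the argument the paper intends: the paper dispatches this lemma with the single remark that it ``directly follows the definition of join cost,'' and your sign check on the bilinear term $\sum_j \lceil |P_j|/c_R\rceil\cdot\sum_{i\in P_j}\CT[i]$ (partition sizes fixed, only the two frequency summands exchanged) is exactly that computation made explicit. Your added observation that the inequality must be run on the relaxed cost rather than the exact per-partition ceiling $\lceil(\sum_{i\in P_j}\CT[i])/b_S\rceil$ is a legitimate subtlety the paper glosses over, and it matches the quantity the dynamic program actually optimizes via $\textsf{CalCost}$.
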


\begin{figure}[t]
\includegraphics[width=0.65\textwidth]{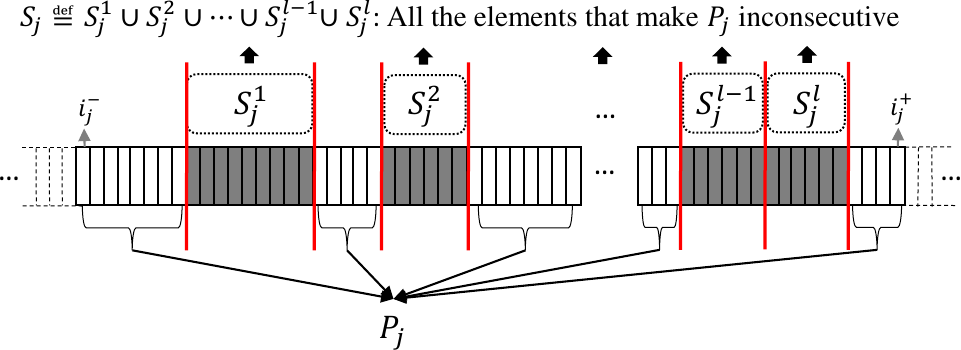}
\caption{An illustration of records from one partition $P_j$ lying in a non-consecutive way in the sorted $\CT$.}
 \label{fig:inconsecutive-P1}
 \end{figure}
To prove the main theorem, the high-level idea is to transform an arbitrary partitioning $\P$ into $\P'$ with desired properties, while keeping the join cost of $\P'$ always smaller (or at least no larger) than that of $\P$. The essence is a {\em swap} procedure, as described by Lemma~\ref{lem:swap}, which can reassign records in a way such records with larger \CT~ values always appear with fewer records in a partition. Its proof directly follows the definition of join cost. Next, we will show how to transform an arbitrary partitioning $\P$ step by step.

\Paragraph{Consecutive Property.} Given an arbitrary partitioning $\P$, we first show how to transform it into $\P'$ such that $\P'$ satisfies the consecutive property while $\JC(\P') \le \JC(\P)$. As described in Algorithm~\ref{alg:swap}, we always identify a partition whose records are not consecutive on the sorted $\CT$, say $P_j$, and apply the swap procedure to it.
Let $i^-_j$ and $i^+_j$ be the minimum and maximum index of records from $P_j$ lying on $\CT$. Let $S^1_j, S^2_j, \cdots, S^\ell_j$ be the longest consecutive records whose indexes fall into $[i^-_j, i^+_j]$ in $\CT$, and belong to the same partition $P_{j'}$ for some $j' \neq j$ ($S^{\ell_1}_j$ and $S^{\ell_2}_j$ may belong to  different partitions). 
 We denote $S_j = S^1_j \cup  S^2_j \cup \cdots \cup S^\ell_j$ as the set of all records that make $P_j$ non-consecutive with respect to the sorted $\CT$ (Figure~\ref{fig:inconsecutive-P1}).
 \begin{algorithm}[h]
\caption{\textsc{SWAP}$(\P)$}
 \label{alg:swap}
 \While{there exists $P_j \in \P$ that is non-consecutive}{
 \While{{\normalfont \bf{true}}}{
    $\textsf{left}, \textsf{right} \gets \min_{i \in [n]: f(i) = j} i, \max_{i \in [n]: f(i) = j}i$\; 
    $S_j \gets \{i \in [n]: \textsf{left} \le i \le \textsf{right}, f(i) \neq j\}$\;
    \lIf{$S_j = \emptyset$}{\textbf{break}}
    $\ell \gets \min_{i \in S_j} i$\;
    \If{$|\mathcal{N}_f(\ell)| > |P_j|$}{
        $\textsc{swap}(f(\ell), f(\textsf{left}))$, $\textsf{left} \gets \textsf{left} + 1$\;
    }
    \lElse{$\textsc{swap}(f(\ell), f(\textsf{right}))$, $\textsf{right} \gets \textsf{right} -1$}
    }
 }
 \Return $\P$\;
\end{algorithm}

Algorithm~\ref{alg:swap} maintains two pointers $\textsf{left}$ and $\textsf{right}$ to indicate the smallest and largest indexes of records to be swapped. Consider the smallest index $\ell \in S_j$. 
 If $|\mathcal{N}_f(i)| > |P_j|$, we swap $\ell$ and $\textsf{left}$ by putting the record at $i$ into $P_{f(\textsf{left})}$ and the record at $\textsf{left}$ into $P_{f(\ell)}$ ($P_{f(\ell)} = \mathcal{N}_f(\ell)$ by definition). Otherwise, we swap $\ell$ and $\textsf{right}$ similarly. 
 We repeat the procedure above until all partitions become consecutive on the sorted $\CT$. It can be seen that the resulted partitioning from $\textsc{SWAP}(\P)$ is ``consecutive'', and the cost does not increase in this process, thanks to the swap lemma.

\begin{algorithm}[h]
 \caption{\textsc{ORDER}$(\P)$}
 \label{alg:sort}
 \While{there exists $P_{j'} \precsim P_j \in \P$ s.t. $\left\lceil \frac{|P_{j'}|}{c_R} \right\rceil < \left\lceil \frac{|P_j|}{c_R} \right\rceil$ }{
    $\ell, \ell' \gets \min_{i \in P_j} i,  \min_{i \in P_{j'}} i$\;
    \lFor{$i \gets 0$ to $|P_{j'}|$}{\textsc{swap}$(f(\ell + i), f(\ell' + i))$}
 $\P \gets \textsc{SWAP}(\P)$\;
 }
\end{algorithm}
 
\Paragraph{Weakly-Ordered Property.} Given a partitioning $\P$ with consecutive property, we next show how to transform it into $\P'$ such that $\P'$ satisfies both consecutive and weakly-ordered properties.
As each partition in $\P$ contains consecutive records in the sorted $\CT$, we can define a partial ordering $\precsim$ on the partitions in $\P$.
For simplicity, we assume that $P_1 \precsim P_2 \precsim \cdots \precsim P_m$.
As described in Algorithm~\ref{alg:sort}, we always identify a pair of partitions that are not ``weakly-ordered'', i.e., $P_{j'} \precsim P_j$ but $\left\lceil |P_{j'}|/c_R \right\rceil < \left\lceil|P_j|/c_R \right\rceil$. 
In this case, we can apply Lemma~\ref{lem:swap} to swap records between $P_{j'}$ and $P_j$. This swap would turn $P_j$ into non-consecutive on the sorted $\CT$.
If this happens, we further invoke Algorithm~\ref{alg:swap} to transform $P_j$ into a consecutive one; moreover, $P_{j'}$ will be put after $P_j$ in the sorted $\CT$. We will repeatedly apply the procedure above until no unordered pair of partitions exists in $\P$.
In fact, a strongly-ordered property also holds by replacing $\left\lceil |P_{j}|/c_R \right\rceil$ with $|P_j|$. We keeps the weakly-ordered version here so that it does not conflict with the following divisible property.

\Paragraph{Divisible Property.} Given a partitioning $\P$ satisfying the consecutive and weakly-ordered property, we next show how to transform it into $\P'$ such that $\P'$ also satisfies divisible property as follows. We start checking the $m$-th partition $P_m$. If $|P_m|$ is divisible by $c_R$, we just skip it. Otherwise, we move records from the previous partition $P_{m-1}$ to $P_m$ until $|P_m|$ is divisible by $c_R$. After done with $P_m$, we move to $P_{m-1}$. We continue this procedure until $|P_m|, |P_{m-1}|, \cdots, |P_2|$ are all divisible by $c_R$.
This fine-grained movement between adjacent partitions naturally preserves the consecutive property as well as the join cost.

\subsubsection{Dynamic Programming}
\label{subsec:dynamic_programming}
With Theorem~\ref{the:main}, we can now reduce the complexity of finding the optimal partitioning in a brute-force way by resorting to dynamic programming, and searching 
candidate partitionings with consecutive, weakly-ordered, and divisible properties. 

\Paragraph{Formulation by Consecutive Property.} We define a sub-problem parameterized by $(i,j)$: finding the optimal partitioning for the first $i$ records in the sorted $\CT$ using $j$ partitions. The join cost of this sub-problem  parameterized by $(i,j)$ is denoted as $V[i][j]$. Recall that the optimal solution of our integer program in Section~\ref{subsec:modeling_io} is exactly $V[n][m]$. As described by Algorithm \ref{alg:matrix_dp}, we can compute $V$ in a bottom-up fashion.
More specifically, we use $V[i][j].\textsf{cost}$ to store the cost of an optimal partitioning for the sub-problem parameterized by $(i,j)$ and $V[i][j].\textsf{LastPar}$ stores the starting position of the last partition correspondingly. 
We also define $\textsf{CalCost}(s,e)$ as the per-partition join $\textsf{PPJ}(R_{j'},S_{j'})$ (omitting the term $\|R_{j'}\|$ in $\textsf{NBJ}(R_{j'}, S_{j'})$ for a pair of partition $R_{j'}$ and $S_{j'}$, if we let $P_{j'} = [s,e]$). 
This can be easily computed using the pre-computed prefix sum:
\begin{equation}
\label{eq:calcost}
\begin{split}
\textsf{CalCost}(s,e)=
\left(\sum_{i=1}^{e}\CT[i] - \sum_{i=1}^{s-1}\CT[i]\right)\cdot & \Bigg\lceil\frac{e - s + 1}{c_R}\Bigg\rceil
\end{split}
\end{equation}
The essence of Algorithm~\ref{alg:matrix_dp} is the following recurrence formula:
\[V[i][j]. \textsf{cost} = \min_{0\le k \le i-1} \left\{V[k][j-1]. \textsf{cost} + \textsf{CalCost}(k+1,i)\right\}\]
which iteratively search the starting position of the last partition in $[0,i-1]$. Also, we can backtrack the index of the last partition to produce the mapping function $f$ from each index to its associated partition, as described by Algorithm~\ref{alg:bt_matrix_cut_pos}.

\begin{algorithm}[h]
\caption{\textsc{Partition}$(\CT, n, m)$}\label{alg:matrix_dp}
Initialize $V$ of sizes $n \times m$\;
\For{$i \gets 1$ to $n$}{
	  $V[i][1].\textsf{cost} \gets \textsf{CalCost}(1,i)$, \ $V[i][1].\textsf{LastPar} \gets 1$\;
}
\For{ $i \gets 2$ to $n$}{
	\For{ $j \gets 2$ to $\min (m, i)$}{
		  $V[i][j].\textsf{cost} \gets +\infty$\;
		\For{ $k \gets 0$ to $i-1$}{
			  $\textsf{tmp} \gets V[k][j-1].\textsf{cost} + \textsf{CalCost}(k+1,i)$\;
			\If{${\normalfont \textsf{tmp}} < V[i][j].\textsf{cost}$}{
				  $V[i][j].\textsf{cost} \gets \textsf{tmp}$\;
				  $V[i][j]. \textsf{LastPar} \gets k+1$\;
                }
                }
		}
  }
\Return $V$\;
\end{algorithm}

\begin{algorithm}[h]
\caption{$\textsc{GetCut}(V, n, m)$}\label{alg:bt_matrix_cut_pos}
$\textsf{LastPar} \gets n$\;
 \For{$j \gets 0$ to $m - 1$}{
    \lFor{$i \gets V[ \normalfont \textsf{LastPar}][m-j].  \normalfont\textsf{LastPar}$ to $ \normalfont\textsf{LastPar}$}{$f(i) \gets m - j$}
    $\textsf{LastPar} \gets V[\textsf{LastPar}][m-j]. \textsf{LastPar} - 1$\;
}
\Return $F$\;
\end{algorithm}

The time complexity of Algorithm~\ref{alg:matrix_dp} is $O(m\cdot n^2)$, which is dominated by the three for-loops, and the space complexity is $O(m\cdot n)$ for storing $V$. This significantly improves the brute-force enumerating method, whose time complexity is as large as $O(m^n)$.

%
%

\Paragraph{Speedup by Weakly-Ordered Property.} Combing the weakly-ordered property in Theorem~\ref{the:main} with the pigeonhole principle, for any sub-problem parameterized by $(i,j)$, there exists an optimal partitioning such that: (\romannumeral 1) the ``largest'' partition contains at least the first $\left\lfloor \frac{i-1}{j}\right\rfloor + 1 - c_R$ records in the sorted $\CT$, and (\romannumeral 2) the ``smallest'' partition contains at most the last $\left\lfloor \frac{i}{j}\right\rfloor + c_R$ records in the sorted $\CT$. Note that the ``largest'' (``smallest'') partition is not essentially the largest (smallest) in terms of the actual partition size.
It is possible that $P_{j'} \precsim P_{j'+1}$ and $\left \lceil |P_{j'}|/c_R\right\rceil =  \left \lceil |P_{j+1}|/c_R\right\rceil$ but $|P_{j'}| \leq |P_{j'+1}|$. However, even if $|P_{j'}| \le |P_{j'+1}|$, the difference cannot be larger than $c_R$ from the weakly-ordered property. 
As such, the ``largest'' partition for the sub-problem parameterized by $(i,j)$ cannot be smaller than $\left\lfloor \frac{i-1}{j}\right\rfloor + 1 - c_R$. Similar analysis can be applied to the upper bound of the size of the ``smallest'' partition. Synthesizing above analysis, we can get two tighter bounds on $k$ in line 8:
\begin{itemize}[leftmargin=*]
    \item {\bf $k \geq \max\{(i - c_R)\cdot(1 - \frac{1}{j}), 0\}$.} The partition $P_j$ contains records from $\CT[k+1:i]$, and $|P_j|$ should be no larger than the last~(``smallest'') partition in $V[k][j-1]$. 
    The last~(``smallest'') partition for the sub-problem parameterized by $(k,j-1)$ is at most $\left\lfloor \frac{k}{j-1} \right\rfloor + c_R$. Hence, $i - k \leq \left\lfloor \frac{k}{j-1} \right\rfloor+c_R$. 
    \item {\bf $k \leq \max\{i - \left\lfloor \frac{n-i-1}{m-j}\right\rfloor - 1 + c_R,1\}$.} The partition $P_j$ contains records from $\CT[k+1:i]$, and it should be no smaller than the ``largest'' partition of remaining records in $\CT[i+1:n]$. Recall that, we still need to put the remaining $n-i$ records into $m-j$ partitions. The largest partition of the sub-problem parameterized by $(n-i,m-j)$ contains at least $\left\lfloor \frac{n-i-1}{m-j}\right\rfloor + 1 - c_R$ records. Hence, $i - k \geq \left\lfloor \frac{n-i-1}{m-j}\right\rfloor + 1 - c_R$.
\end{itemize}

After changing the range of the third loop (line 8,  Algorithm~\ref{alg:matrix_dp}) to $[\max\{(i - c_R)\cdot(1 - \frac{1}{j}), 0\}, \max\{i - \left\lfloor \frac{n-i-1}{m-j}\right\rfloor - 1 + c_R,1\}]$,  
we can skip the calculation of $V[i][j]$ when $(i - c_R)\cdot(1 - \frac{1}{j}) > i - \left\lfloor \frac{n-i-1}{m-j}\right\rfloor - 1 + c_R$. This pruning reduces the time complexity to $O(n^2 \log m)$.

\Paragraph{Speedup by Divisible Property.} 
%
To utilize the divisible property, we initially put the first $(n\mod c_R)$ records in the first partition, and then change the step size (line 5 and line 8) of Algorithm~\ref{alg:matrix_dp} to $c_R$. This way, we can shrink the size of $V$ to $\left(\left\lceil n/c_R\right\rceil + 1\right)\times m$. Recall that $m$ can be as large as $B-1$, $F$ is a constant larger than 1, and thus $c_R = \left\lfloor b_R\cdot (B-2) / F \right\rfloor \ge m$. Therefore, the time complexity can be reduced to $O\left(\left(n/c_R \right)^2 \log m \right) = O\left(\frac{n^2 \log m}{m^2}\right)$ and the space complexity to $O\left(n/c_R \cdot m\right)=O(n)$.

\begin{figure}
\centering
     \begin{subfigure}{0.3\linewidth}
      \includegraphics[width=\linewidth]{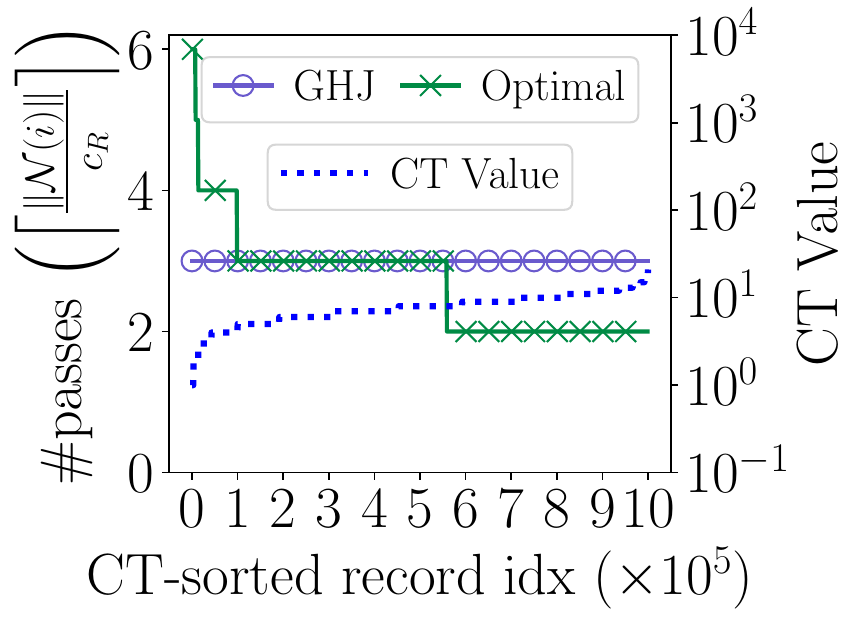}
      \caption{Uniform}
         \label{fig:case-study-uni}
     \end{subfigure}
     \begin{subfigure}{0.3\linewidth}
        \includegraphics[width=\linewidth]{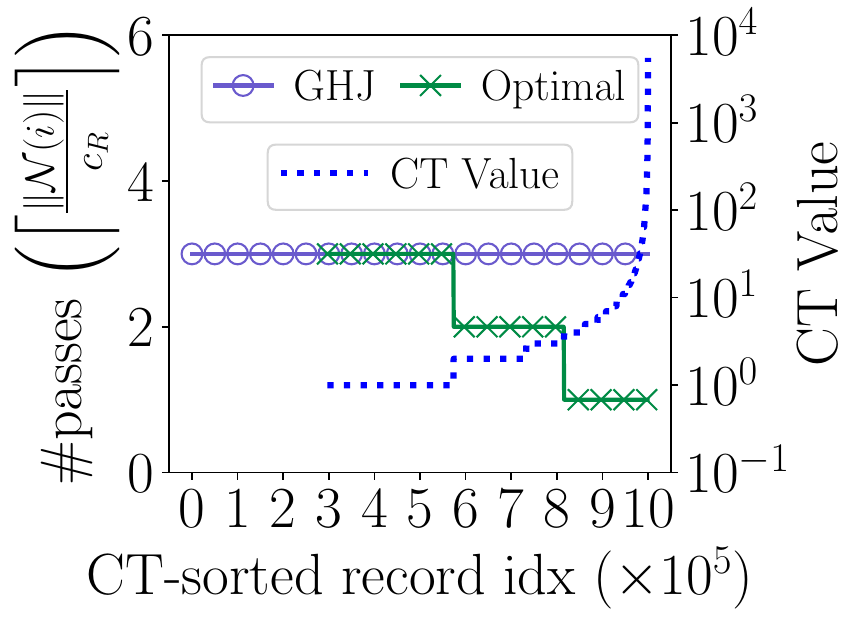}
         \caption{Zipfian}
         \label{fig:case-study-zipf}
     \end{subfigure}
     \vspace{-0.1in}
     \caption{A comparison of the number of passes required for the outer relation $S$ between the partitioning adopted by \textsf{DHH}/\textsf{GHJ} and our optimal partitioning,  assuming $B \le \sqrt{\|R\|\cdot F}$ (\textsf{DHH} without skew optimization downgrades to \textsf{GHJ}), $n_R=1M$, $n_S = 8M$, $F=1.02$ and $B=320$ pages. Moreover, the record size of $R$ is $1KB$, thus $R$ occupies 250K pages. In (a), random (uniform) partitioning (\textsf{GHJ}), the partition size is around $250K/319 \approx 784$ pages which require $\lceil 783/(318\times F) \rceil = 3$ passes to scan every partition of $S$. In contrast, the optimal partitioning allows the number of passes to vary from 2 to 6, which follows a non-monotonically decreasing trend as \CT{} value increases. When we have a skewed correlation (b), a similar pattern is observed. Further, the optimal partitioning naturally excludes records with $\CT[i]=0$.}
\label{fig:case-study}
\end{figure}

\subsection{Optimal Partitioning With Caching}
\label{subsec:hybrid-join}
We move to the general case when records can be cached. In this case, some records stay in memory during the partition phase, so as to avoid repetitive reads/writes, while remaining records still go through the partitioning phase as Section~\ref{subsec:opt_partition_no_caching}. A natural question arises: which records should stay in memory and which records should go to partition?
As we can see in DHH (Figure~\ref{fig:dynamic_hybrid_join_intro}), the hash function $h_{\textsf{split}}$, together with POB actually categorizes all the join keys by their indexes into two sets, memory-resident keys as $K_{\textsf{mem}} \subseteq [n]$ and disk-resident keys as $K_{\textsf{disk}} = [n] - K_{\textsf{mem}}$.
Then, it suffices to find an optimal partitioning (without caching) for $K_{\textsf{disk}}$, which degenerates to our problem in Section~\ref{subsec:opt_partition_no_caching}. Together, we derive a general join cost function $\JC(K_{\textsf{disk}}, \P, m)$ as:
\[ \left\lceil\sum\limits_{i \in K_{\textsf{disk}}} \frac{\CT[i]}{b_S}  \cdot \lceil \frac{|\mathcal{N}_f(i)|}{c_R} \rceil \right\rceil  + \left(1+\mu\right)\cdot \left\lceil \frac{|K_{\textsf{disk}}|}{b_R} \right\rceil +  \mu \cdot  \sum\limits_{i \in K_{\textsf{disk}}} \left\lceil\frac{\CT[i]}{b_S} \right\rceil\]
where $b_R$ (resp. $b_S$) represents the number of records per page for relation $R$ (resp. $S$), and $\mu$ is the ratio between random write and sequential read.
Compared to the original cost function $\JC(\P,m)$, the new cost function $\JC(K_{\textsf{disk}}, \P,m)$ involves two additional terms -- $\mu \cdot (\left\lceil |K_{\textsf{disk}}|/{b_R} \right\rceil + \left\lceil \sum_{i \in K_{\textsf{disk}}}\CT[i]/{b_S} \right\rceil)$ as the cost of partitioning both relations, and $\left\lceil |K_{\textsf{disk}}|/{b_R} \right\rceil$ as the cost of loading $R$ in the probing phase. Together, we have the following integer program:
\begin{equation*}
\begin{split}
\min_{K_{\textsf{disk}}, \P} & \quad \JC(K_{\textsf{disk}}, \P, m)\\
\textrm{subject to}
& \quad \left\lceil \frac{n - |K_{\textsf{disk}}|}{b_R}\cdot F  \right\rceil + m + 2\leq B,\\
& \quad \sum\limits_{j=1}^{m}\P_{i,j}=1, \forall i \in [n] \\
& \quad \P_{i,j} \in \{0,1\}, \forall i \in [n], j \in [m]
\end{split}
\label{eq:JC_def_hybrid}
\end{equation*}
where the first constraint limits the size of the memory-resident hash table, as $2$ pages are reserved for input and join output, and $m$ pages for the output buffer for each disk-resident partition. 
\begin{algorithm}[t]
\caption{\textsc{\ExactAlgName{}}$(\CT, n, B)$}\label{alg:hybrid_matrixdp}
  $\textsf{tmp} \gets +\infty$, $f \gets \emptyset$\;
\For{$k \gets 0$ to $c_R$}{
	$V \gets  \textbf{\textsc{Partition}}(\CT[1:n-k], n - k, B - 2 - \lceil \frac{k\cdot F}{b_R} \rceil)$\;
	  $c_{\textsf{probe}} \gets \lceil \frac{n-k}{b_R} \rceil + V[n-k][B - 2 - \lceil \frac{k\cdot F}{b_R} \rceil].\textsf{cost} $\;
	  $c_{\textsf{part}} \gets \mu \cdot (\lceil \frac{n-k}{b_R} \rceil + \lceil \frac{1}{b_S} \cdot \sum_{i=1}^{n-k}\CT[i] \rceil )$\;
	\If{{\normalfont $\textsf{tmp} < c_{\textsf{probe}} + c_{\textsf{part}}$}}{
		  $\textsf{tmp} \gets c_{\textsf{probe}} + c_{\textsf{part}}$\;
		  $f \gets \textsc{GetCut}(V,n-k,B - 2 - \lceil \frac{k\cdot F}{b_R} \rceil)$\;
	}
}
  \Return $f$\;
\end{algorithm}

The integer program above can be interpreted by first fixing the $K_{\textsf{disk}}$  (i.e., which keys/records should be spilled out to disk), and then finding the optimal partitioning for $K_{\textsf{disk}}$.
Our Algorithm \ExactAlgName{} can be invoked as a primitive for given $K_{\textsf{disk}}$.
It remains to identify the optimal $K_{\textsf{disk}}$ (i.e., which records should be spilled out to disk) that leads to the overall minimum cost. 
Although there is an exponentially large number of candidates for $K_{\textsf{disk}}$, it is easy to see that records with low $\CT$ value should be spilled out to disk, while records with high $\CT$ values should be kept in memory.
In other words, the consecutive property of Theorem~\ref{the:main} still applies for this hybrid partitioning -- we can achieve the optimal cost by caching the top-$k$ records from $\CT$ in memory, if restricting the size of $K_{\textsf{mem}}$ to be $k$, where $k < c_R$ since at most $c_R$ records can remain in memory.
By extending Algorithm~\ref{alg:matrix_dp} to a hybrid version, we finally come to the 
\ExactAlgName{} in Algorithm~\ref{alg:hybrid_matrixdp}.

\Paragraph{Complexity Analysis.} As the number of records that can remain in memory is at most $c_R=\lfloor b_R\cdot (B-2)/F\rfloor$, we run Algorithm~\ref{alg:matrix_dp} at most $c_R$ times. 
Recall that the complexity of Algorithm~\ref{alg:matrix_dp} can be reduced to $O(n^2\log m/c_R^2)$, the time complexity for \ExactAlgName{} is now $O(n^2\log m/c_R) =O(n^2\log m/m)$. The space complexity is $O(n)$.

\section{Our Practical Algorithm}
\label{sec:practical-partitioning-algorithm}
In practice, it is impossible to know the exact correlation between input relations for each key in advance; instead, a much smaller set of high-frequency keys are collected for skew optimization, such as the Most Common Values (MCVs) in PostgreSQL~\cite{PostgreSQLa}. PostgreSQL supports skew optimization by assigning 2\% memory to build a hash table for skewed keys if their total frequency is larger than 1\% of the outer relation, but this heuristic uses fixed thresholds to trigger skew optimization, which may be sub-optimal for arbitrary correlation skewness.
Hence, we present a practical algorithm built upon the theoretically I/O-optimal algorithm in the previous section.
Instead of relying on the whole correlation $\CT$ table, our practical algorithm only needs the same amount of MCVs as PostgreSQL and generates a hybrid partitioning that can be adaptive to arbitrary correlation skewness, and constrained memory budget.

\subsection{Hybrid Partitioning}
\label{subsec:approximate_algorithm}
Let $K$ be the set of skew keys tracked from MCVs with their \CT~ values known.
Guided by the consecutive property and weakly-ordered property in Theorem~\ref{the:main}, it is always efficient to keep as many skew keys as possible in memory if possible, so that we can avoid reading the corresponding records in $S$ repeatedly. For the rest of keys not tracked by MCVs, we consider their frequency as low and can be tackled well by the traditional hashing method \textsf{GHJ}/\textsf{DHH}.

\Paragraph{Framework.} In a hybrid partitioning, we design a hash set $HS_{\textsf{mem}}$ to store a subset of skew key (denoted as $K_{\textsf{mem}} \subseteq K$) that will be used to build the in-memory hash table $HT_{\textsf{mem}}$, a hash map $f_{\textsf{disk}}$ to store a subset of skew keys (denoted as $K_{\textsf{disk}} \subseteq K$) with their designated partition identifiers, and the rest of skew keys (if there exists any) together with non-skew keys (denoted as $K_{\textsf{rest}}$) will be uniformly partitioned by \textsf{GHJ}/\textsf{DHH} with a certain memory budget $m_{\textsf{rest}}$ (in pages). With $HS_{\textsf{mem}}$, $f_{\textsf{disk}}$ and $m_{\textsf{rest}}$ in hand, we can partition records in $R$ using Algorithm~\ref{alg:three_set_partition_R}.
For each record in $R$, if its key is in $HS_{\textsf{mem}}$, we put the entire record in the in-memory hash table $HT_{\textsf{mem}}$ (the hash set $HS_{\textsf{mem}}$ only stores the key);
if its key belongs to $f_{\textsf{disk}}$, we put it into the partition specified by $f_{\textsf{disk}}$, which will be spilled to disk later;
otherwise, it will be assigned by \textsf{GHJ}/\textsf{DHH} along with the rest of the records using the $m_{\textsf{rest}}$ pages reserved. We can partition records in $S$ similarly using Algorithm~\ref{alg:three_set_partition_S}. This hybrid partitioning workflow is visualized in Figure~\ref{fig:NOCAP_Join}.

\begin{figure}[t]
\includegraphics[width=0.75\textwidth]{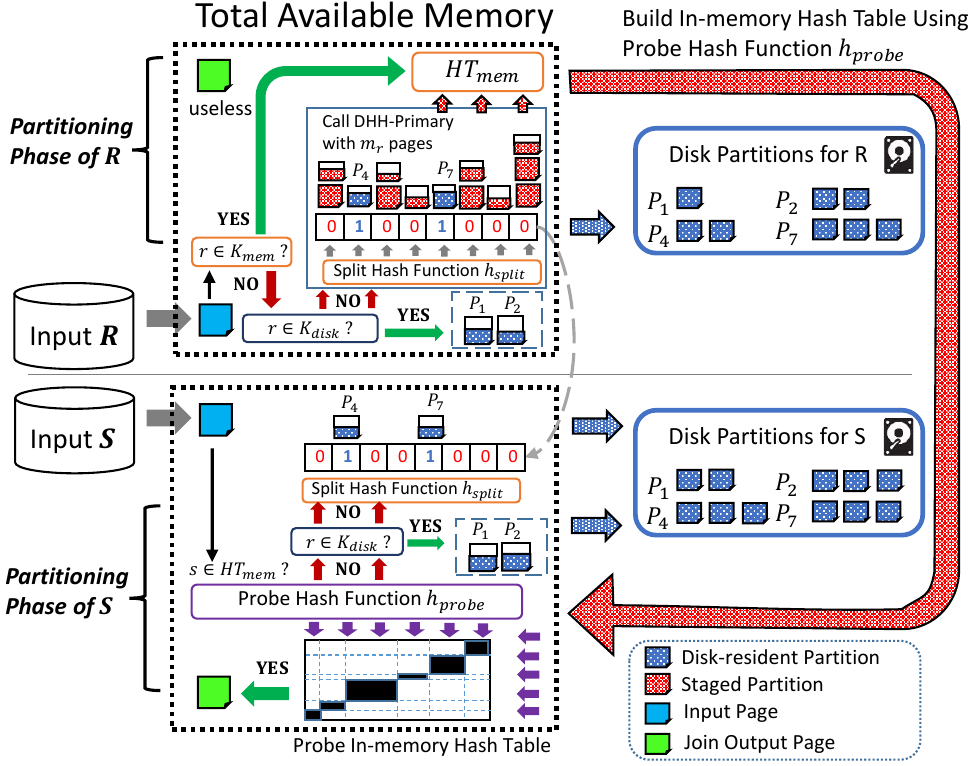}
\vspace{-0.1in}
\caption{An illustration of the hybrid partitioning workflow of NOCAP. Partitions $P_4$ and $P_7$ are spilled to disk when invoking \textsf{DHH}, 
while $P_1$ and $P_2$ are written to disk due to designated partitioning by $f_{disk}$.
Other records in $R$ either have keys from $K_{\textsf{mem}}$ or stage in memory when invoking $\textsf{DHH}$, which will be inserted to the in-memory hash table $HT_{\textsf{mem}}$.}
\label{fig:NOCAP_Join}
\end{figure} 

\begin{figure}[t]
\includegraphics[width=0.65\textwidth]{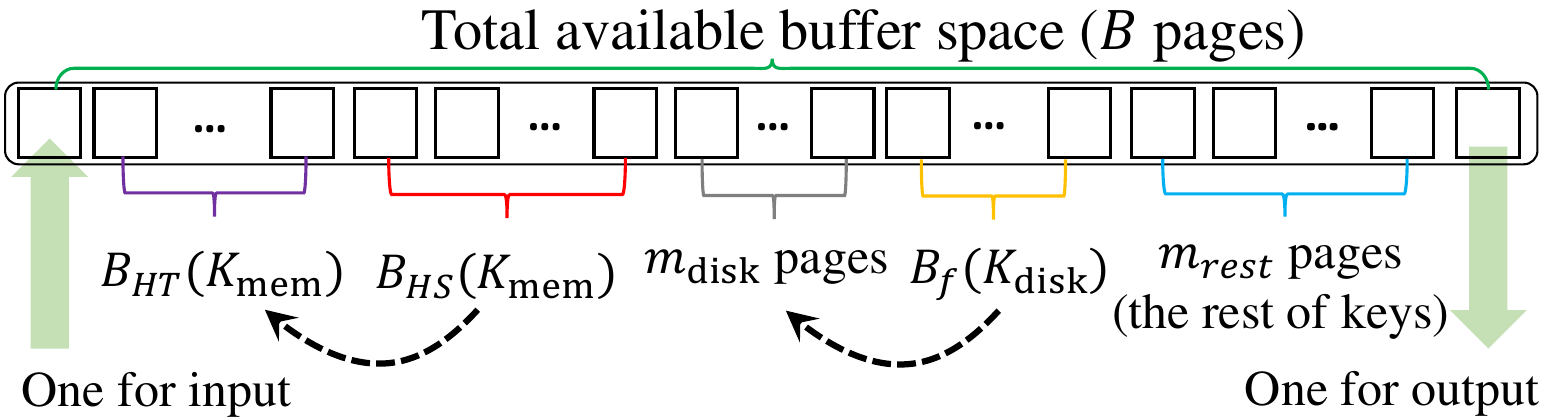}
\vspace{-0.1in}
\caption{The memory breakdown when partitioning $R$. 
}
\label{fig:memory_division_in_approximate_algorithm}
\end{figure} 

\begin{algorithm}[h]
\caption{$\textsc{Hybrid-Partition-Primary}(R, HS_{\textsf{mem}}, f_{\textsf{disk}}, m_\textsf{rest})$}\label{alg:three_set_partition_R}
$R' \gets \{r \in R: r.\textsf{key} \notin HS_{\textsf{mem}},r.\textsf{key} \notin f_{\textsf{disk}}\}$;\Comment{Logically construct $R'$ without instantiation}\;
\ForEach{$r \in R-R'$}{
    \lIf{{\normalfont $r.\textsf{key} \in HS_{\textsf{mem}}$}}{add $r$ to $HT_{\textsf{mem}}[h_\textsf{probe}(r.\textsf{key})]$}
    \lElse{assign $r$ to disk-resident
            partition $f_{\textsf{disk}}(r.\textsf{key})$} 
}
$(HT'_{\textsf{mem}}, \POB) \gets$ \textsc{DHH-Primary}$(R', m_{\textsf{rest}})$\;
\Return $(HT'_{\textsf{mem}} \cup HT_{\textsf{mem}}, \POB)$
\end{algorithm}
\begin{algorithm}[h]
\caption{$\textsc{Hybrid-Partition-Foreign}(S, HT_{\textsf{mem}}, f_{\textsf{disk}},\POB)$}\label{alg:three_set_partition_S}
$S' \gets \{s \in S: s.\textsf{key} \notin f_{\textsf{disk}}\}$;\Comment{Logically construct $S'$ without instantiation}\;
\ForEach{$s \in S'$}{ 
    \If{\normalfont $s.\textsf{key} \in f_{disk}$}{
          assign $s$ to disk-resident
            partition $f_{\textsf{disk}}(s.\textsf{key})$
        }
}
\textsc{DHH-Foreign}$(S - S', HT_{\textsf{mem}}, \POB)$\;
\end{algorithm}

Regarding whether to use \textsf{GHJ} or \textsf{DHH}, we note the following difference. When the available memory is large, say $\lceil |K_\textsf{rest}|/c_R \rceil < m_{\textsf{rest}}$,  we simply invoke \textsf{DHH} to partition the rest of keys $K_\textsf{rest}$ using $m_{\textsf{rest}}$ pages. When the available memory is small, as discussed in Section~\ref{subsec:dynamic_hhj}, the \textsf{DHH} just downgrades to the \textsf{GHJ}.

\Paragraph{Enforcing Memory Constraints.} State-of-the-art systems (e.g., MySQL and PostgreSQL) typically assign to each join operator a user-defined memory budget.
The default memory limit for the join operator in MySQL is 256 KB~\cite{MySQL2021} and the one in PostgreSQL is 4 MB~\cite{PostgreSQL2022}.
Assuming the available buffer is in total $B$ pages, we need to ensure that Algorithm~\ref{alg:three_set_partition_R} uses no more than $B$ pages.
We present a memory breakdown as illustrated in Figure~\ref{fig:memory_division_in_approximate_algorithm}.
\begin{itemize}[leftmargin=*]
    \item $B_{HS}(K_{\textsf{mem}})$:\#pages for the hash set $HS_{\textsf{mem}}$ with keys in $K_{\textsf{mem}}$;
    \item $B_{HT}(K_{\textsf{mem}})$:\#pages for the hash table $HT_{\textsf{mem}}$ with records whose keys are in $K_{\textsf{mem}}$;
    \item $B_f(K_{\textsf{disk}})$: \#pages for the hash map $f_\textsf{disk}$ with keys in $K_{\textsf{disk}}$;
    \item $m_\textsf{disk}$: \#pages as write buffer for all the on-disk partitions specified in $f_\textsf{disk}$ (which is the same as \#partitions specified in $f_\textsf{disk}$);
    \item $m_{\textsf{rest}}$: \#pages to partition records whose keys are in $K_{\textsf{rest}}$;
\end{itemize}
Moreover, we denote $ks$ as the size of each key (in bytes), and $ps$ as the size of each page (in bytes). We can roughly have $B_{HT}(K_{\textsf{mem}})=\lceil b_R\cdot |K_{\textsf{mem}}|/F\rceil$, $B_{HS}(|K_{\textsf{mem}}|)=\lceil ks\cdot |K_{\textsf{mem}}|/(F\cdot ps)\rceil$, and $B_{f}(|K_{\textsf{disk}}|) =\lceil (ks + 4)\cdot |K_{\textsf{disk}}|/(F\cdot ps)\rceil$, where a 4-byte integer is used to store the partition identifier in the hash map $f_{\textsf{disk}}$.  
At last, we come to the memory constraint subject to $B$ (the total number of pages):
\begin{equation*}
B_{HS}(|K_{\textsf{mem}}|) + B_{HT}(|K_{\textsf{mem}}|) + B_f(|K_{\textsf{disk}}|) + m_{\textsf{disk}} + m_\textsf{rest}\leq B - 2
\label{eq:formal_mem_constraint}
\end{equation*}

\Paragraph{Cost function.} Combining all these above together, we come to the cost function and integer programming as follows:
\begin{align*}
    \min_{K_\textsf{mem}, K_\textsf{disk}, \P_\textsf{disk}} \ \ & \ \JC(K_\textsf{disk}, \P_\textsf{disk}, m_\textsf{disk}) + g_{\textsf{DHH}}(|K_{\textsf{disk}}|, m_{\textsf{rest}}) \\
   \textrm{s.t.} \ \ 
    & \ B_{HS}(|K_{\textsf{mem}}|) + B_{HT}(|K_{\textsf{mem}}|) 
    + B_f(|K_{\textsf{disk}}|) \leq B - 2 -m_{\textsf{disk}} - m_\textsf{rest}, \\
    & 0 \le m_{\textsf{disk}}, 0\le  m_{\textsf{rest}} \leq B - 2, \\
    & K_\textsf{mem} \cap K_\textsf{disk} = \emptyset, K_{\textsf{mem}} \cup K_{\textsf{disk}} \subseteq K \\
    & \sum\limits_{j=1}^{m}\P_{i,j}=1, \forall i \in [n] \\
    & \P_{i,j} \in \{0,1\}, \forall i \in [n], \forall j \in [m]
\end{align*}

In the objective function, the first term $\JC(K_\textsf{disk}, f_\textsf{disk}, m_\textsf{disk})$ is the cost of partitioning $K_\textsf{disk}$ into $m_\textsf{disk}$ partitions, and the second term $g_{\textsf{DHH}}(|K_{\textsf{rest}}|, m_{\textsf{rest}})$ is the estimated cost of partitioning $K_\textsf{rest}$ with $m_{\textsf{rest}}$ pages using either \textsf{GHJ} or \textsf{DHH}, depending on how large $m_{\textsf{rest}}$ is, as mentioned earlier.
We will discuss $g_{\textsf{DHH}}(|K_{\textsf{disk}}|, m_{\textsf{rest}})$ in Section~\ref{subsec:rounded_hash_join}. In both estimation models of \textsf{GHJ} or \textsf{DHH}, we need to know the total number of records from $S$ whose key falls into $K_\textsf{rest}$, which can be estimated by $n_S - \sum_{i \in K_\textsf{mem} \cup K_\textsf{disk}} \CT[i]$, since the \CT~values of keys in $K$ are known from MCVs. 

To find out the best combination of $K_{\textsf{mem}}$, $K_{\textsf{rest}}$ and $f_{\textsf{disk}}$ (the optimal value of $m_\textsf{rest}$ can be calculated using Equation~\eqref{eq:formal_mem_constraint}), we iterate all possible combinations of $|K_{\textsf{mem}}|$, $|K_{\textsf{disk}}|$ and $f_{\textsf{disk}}$, and pick the one with the minimum estimated cost, as described by Algorithm~\ref{alg:find_best_Mk}. 
Then, implied by the consecutive property and weakly-ordered property in Theorem~\ref{the:main} (i.e., keys with higher frequency should be kept in memory or in a small partition with higher priority), we always pick the top-$|K_{\textsf{mem}}|$ keys from $K$ as $K_{\textsf{mem}}$ and the next top-$|K_{\textsf{disk}}|$ keys from $K$ as $K_{\textsf{disk}}$.

\begin{algorithm}[h]
\caption{$\textsc{NOCAP}(\CT, k, n, m)$}\label{alg:find_best_Mk}
$c_{opt} \gets +\infty$, $k_{\textsf{mem}} \gets 0$, $k_{\textsf{disk}} \gets 0$, $f_{\textsf{disk}} \gets \emptyset$\;
\For{$i_1 \gets 0$ to $\min(k,c_R)$}{
	$V \gets \mathbf{\textsc{Partition}}(\CT[i_1+1:k], k-i_1, \left\lceil \frac{k-i_1}{c_R} \right\rceil)$\;
	\For{$i_2 \gets 0$ to $\min(k,c_R) - i_1$} {
		\For{$j \gets \min(i_2,1)$ to $\lceil \frac{i_2}{c_R} \rceil$} {
			$m_{\textsf{rest}} \gets B-2-B_{HT}(i_1)-B_{HS}(i_1)-B_{f}(i_2) - j$\;
			$c_{\textsf{probe}} \gets V[i_2][j].\textsf{cost}$\;
			$c_{\textsf{part}} \gets \mu \cdot \left(\lceil \frac{i_2}{b_R} \rceil + \lceil \frac{1}{b_S} \cdot \sum_{j=i_1+1}^{j=i_2}\CT[j] \rceil \right)$\;
			$c_{\textsf{\textsf{rest}}} \gets g_{\textsf{DHH}}(i_2+1, n, m_{\textsf{rest}})$\;
			$c_{\textsf{tmp}} \gets c_{\textsf{probe}}+c_{\textsf{part}}+c_{\textsf{rest}}$\;
			\If{{\normalfont $c_{\textsf{tmp}} < c_{\textsf{opt}}$}} {
			$c_{opt} \gets c_{\textsf{tmp}}$, $k_{\textsf{mem}}\gets i_1,k_{\textsf{disk}} \gets i_2$\;
			$f_{\textsf{disk}} \gets \mathbf{\textsc{GetCut}}(V,i_2,j) $\;
		}
		}
	}
}
\Return $k_{\textsf{mem}}$, $k_{\textsf{disk}}$, $f_{\textsf{disk}}$\;
\end{algorithm}

\subsection{Optimizations with Rounded Hash}
\label{subsec:rounded_hash_join}
In our implementation, we also design {\em rounded hash} (RH) in partitioning that can further reduce the I/O cost of scanning the outer relation in the per-partition join. This is motivated by the divisible property in Theorem~\ref{the:main}, that the sizes of most partitions should be divisible by $c_R$ in the optimal partitioning.
To understand why this insight may help, let us consider an example in Figure~\ref{fig:uniform-vs-non-uniform-partition}. 

We note that uniform partitioning namely assigns records using a uniform hash function $\textsf{PartID}= \textsf{hash}(\textsf{key})\mod m$, based on the modular function with respect to $m$ (the number of partitions), which is denoted as {\em plain hash} (PH). To reduce the extra I/O cost using non-uniform partitioning, our RH uses an additional modular function with respect to $\left\lceil n/c_R \right\rceil$:
\begin{equation}
\textsf{PartID}= \left( \textsf{hash}(\textsf{key})\mod \left\lceil n/c_R \right\rceil\right)\mod m
\label{eq:rounded-partition}
\end{equation}
where $c_R = \left\lfloor b_R\cdot (B-2) / F \right\rfloor$ is the chunk size of records. Intuitively, our RH maximizes the number of partitions with $\left\lfloor \left\lceil n/{c_R}  \right\rceil /(B-1) \right\rfloor$ chunks. In addition to reducing unnecessary passes, RH also merge small partitions to avoid fragmentation when partitioning~\cite{Kitsuregawa1989}.

\begin{figure}[t]   
        \includegraphics[width=0.65\linewidth]{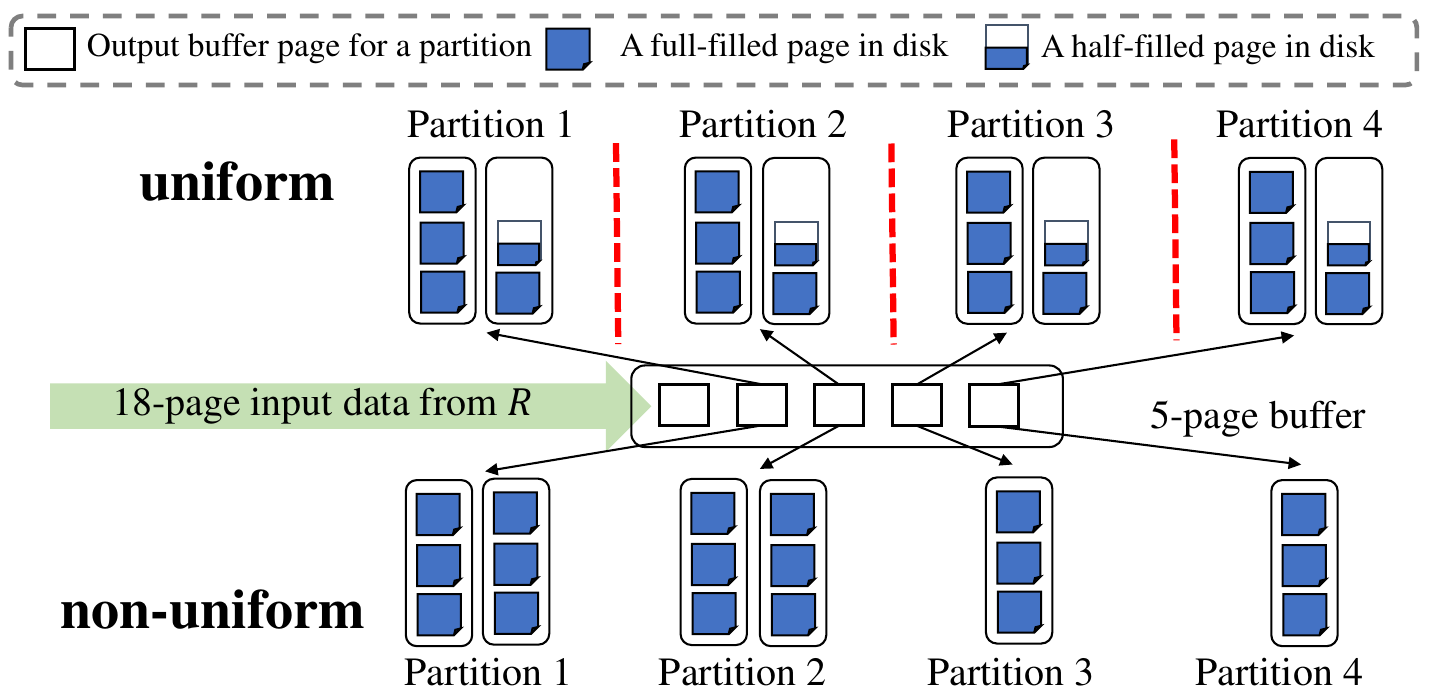}
        \vspace{-0.1in}
          \caption{An illustration of the uniform and non-uniform partitioning on the inner relation $R$, with a 5-page buffer (with 4 pages for partitioning $R$ and 1 page for streaming the input). In \textsf{NBJ}, the maximum \#chunks loaded at a time is 3-page long (with 1 page for streaming $S$ and $1$ page for the output assuming $F=1.0$). Uniform partitioning leads to 4 partitions with each having $18/4=4.5$ pages worth of data. One needs to read $S$ twice for each partition. As a comparison, rounded hashing with $n/{c_R} =6$ and $m=4$ generates 2 partitions with each having 6 pages, and 2 partitions with each having 3 page worth of data. In that way, partitions 1 and 2 need two passes on the corresponding partitions in $S$, while partitions 3 and 4 only need a single pass.
          }
          \centering
         \label{fig:uniform-vs-non-uniform-partition}
\end{figure} 

\Paragraph{Parametric Optimization.} Since we rely on a hash function to partition records, partitions 1 and 2 in Figure~\ref{fig:uniform-vs-non-uniform-partition} could occasionally ``overflow'' and grow larger than six pages due to randomness.
In a skewed workload, overflowed partitions could result in higher cost compared to the cost of uniform partitioning.
In fact, a more robust scheme in Figure~\ref{fig:uniform-vs-non-uniform-partition} is to assign 2.5 pages to partition 4 and evenly distribute 15.5 pages to partitions 1, 2, and 3.
Formally, we replace $c_R$ with $c_R^*=\beta \cdot c_R$ in Equation~\eqref{eq:rounded-partition}, where $\beta$ should be very close to 1 in the range $(0,1]$ (we fix $\beta=0.95$ in our implementation). 

\Paragraph{Cost Estimation.} To get a rough estimation of how many I/Os are used by RH, we first build the I/O model for PH-based partitioning. 
We actually consider a generalized problem by estimating the per-partition cost for partitioning records lying on the \CT\ array from index $s$ to index $e$ into $m$ partitions using PH, noted by $g_{PH}(s,e,m)$:
\begin{equation*}
g_{\textsf{PH}}(s,e,m)= 
\sum_{j=1}^m \mathbb{E}\left[ \left\lceil \frac{|P_j|}{c_R} \right\rceil \right] \cdot \sum\limits_{i \in P_j} \CT[i] 
 = \left\lceil \frac{e - s + 1}{m \cdot c_R} \right\rceil \cdot \sum\limits_{i = s}^{e}\CT[i] 
\end{equation*}
where each $\|P_j\|$ is approximated as a Poisson distribution with $\lambda = \frac{e-s+1}{m}$. 
When estimating the cost of RH we need to know how much proportion (noted by $\gamma$) of data from $S$ is scanned with one fewer pass ($1-\gamma$ of data is scanned with one more pass):
$$
\gamma = \left(\left(\left\lceil \frac{e - s + 1}{c_R^*}\right\rceil \text{ mod } m \right)\cdot \left\lfloor \frac{e - s + 1}{m\cdot c_R^*} \right\rfloor \cdot c_R^*\right)/(e - s + 1)
$$
As such, the normalized number of rounded passes is:
\begin{equation}
\#\textsf{rounded}\_\textsf{passes}(s,e) = \gamma \cdot \left\lfloor \frac{e - s + 1}{m\cdot c_R^*} \right\rfloor + (1-\gamma) \cdot \left\lceil \frac{e - s + 1}{m\cdot c_R^*} \right\rceil
\nonumber
\end{equation}
We then have our cost model for rounded hash $g_{\textsf{RH}}$:
\begin{equation}
g_{\textsf{RH}}(s, e,m)=\#\textsf{rounded}\_\textsf{passes}(s,e) \cdot  \sum\limits_{i = s}^{e}\CT[i] 
\label{eq:expected_rounded_hash_join_cost}
\end{equation}
The cost of our original problem is captured by $g_{\textsf{RH}}(1,n,m)$.

\Paragraph{Overestimation using Chernoff Bound.}
When the filling percent for each partition using PH reaches the predefined threshold $\beta$, ($\frac{e-s+1}{m} > \beta\cdot t \cdot c_R$, where $t$ is the largest positive integer such that $t\cdot c_R > \frac{e-s+1}{m} $), we disable RH. 
This is to avoid additional passes that are triggered by occasional overflow in the random process.
We also need to take into account the possible overflow in the cost model.
In fact, every element is distributed independently and identically using the same hash function, and thus we can apply Chernoff bound to overestimate the probability.
For an arbitrary partition, we define $X=\sum_{i=s}^{e}X_i$ where $X_i=1$ (i.e., $i$-th element is assigned to this partition) with probability $\frac{1}{m}$, and $X_i=0$ with probability $1-\frac{1}{m}$.
Then $\mathbb{E}[X]=\frac{e-s+1}{m}$. From Chernoff bound, $\Pr[X > t\cdot c_R] < \left( \frac{e^\sigma}{(1+\sigma)^{1+\sigma}}\right)^{\mathbb{E}[X]}$,
where $\sigma = \frac{t\cdot c_R\cdot m}{e-s+1} - 1$. We thus let $1-\gamma =\left( \frac{e^\sigma}{(1+\sigma)^{1+\sigma}}\right)^{\mathbb{E}[X]}$ and set:
$$
\#\textsf{rounded}\_\textsf{passes}(s,e) =
\gamma \cdot \Bigg\lceil \frac{e - s + 1}{m\cdot c_R^*} \Bigg\rceil + (1-\gamma)\cdot \left( \Bigg\lceil \frac{e - s + 1}{m\cdot c_R^*} \Bigg\rceil + 1\right) 
$$
We reuse Equation~\eqref{eq:expected_rounded_hash_join_cost} to estimate the I/O cost.

\section{Experimental Analysis}
\label{sec:exp}

\begin{figure*}[t!]
\centering
     \begin{subfigure}{0.24\linewidth}
         \centering
         \includegraphics[width=\linewidth]{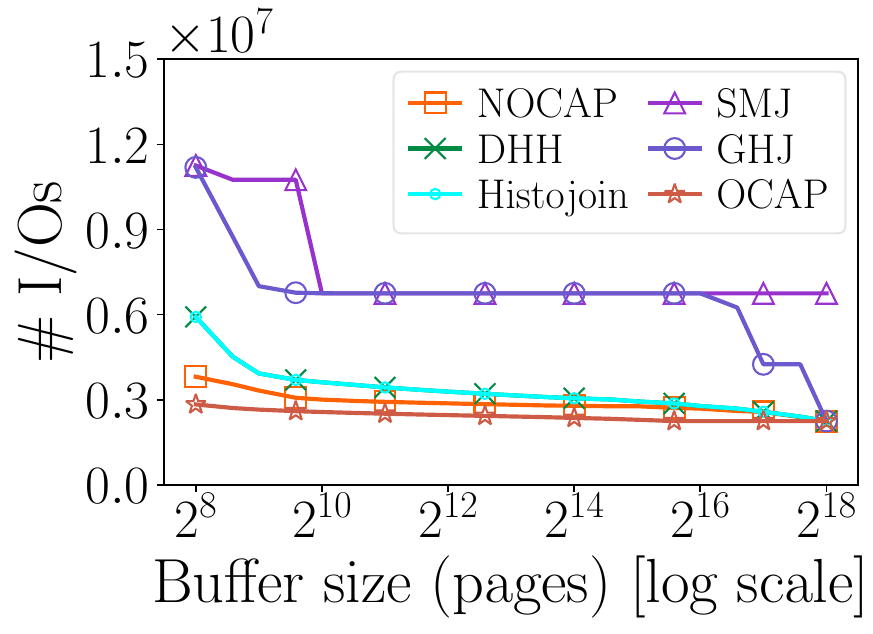}
         \caption{Zipf ($\alpha$=$1.3$)}
         \label{fig:varying-buffer-zipf-1.3-io}
     \end{subfigure}
     \hfill
      \begin{subfigure}{0.24\linewidth}
         \centering
         \includegraphics[width=\linewidth]{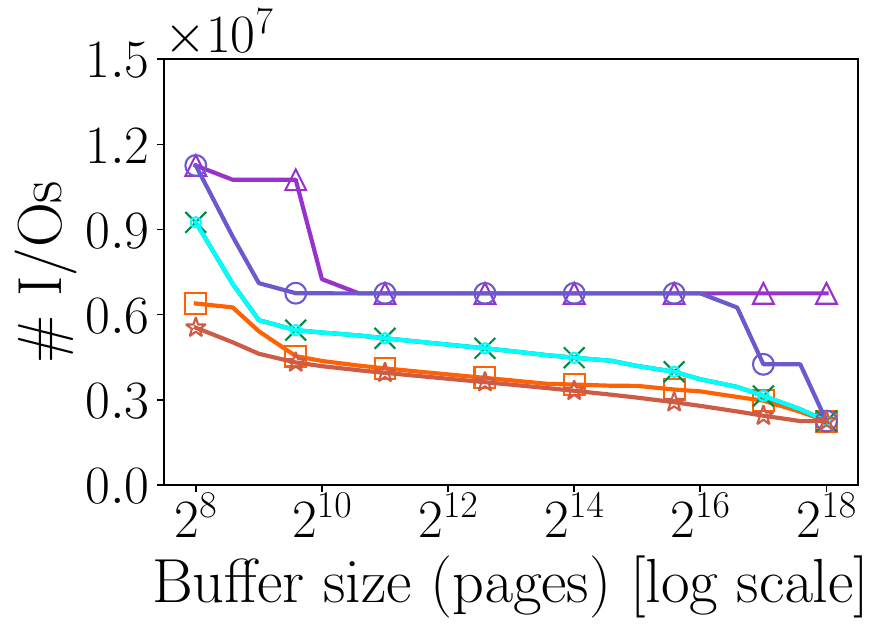}
         \caption{Zipf ($\alpha$=$1.0$)}
         \label{fig:varying-buffer-zipf-1.0-io}
     \end{subfigure}
     \hfill
  	\begin{subfigure}{0.24\linewidth}
         \centering
         \includegraphics[width=\linewidth]{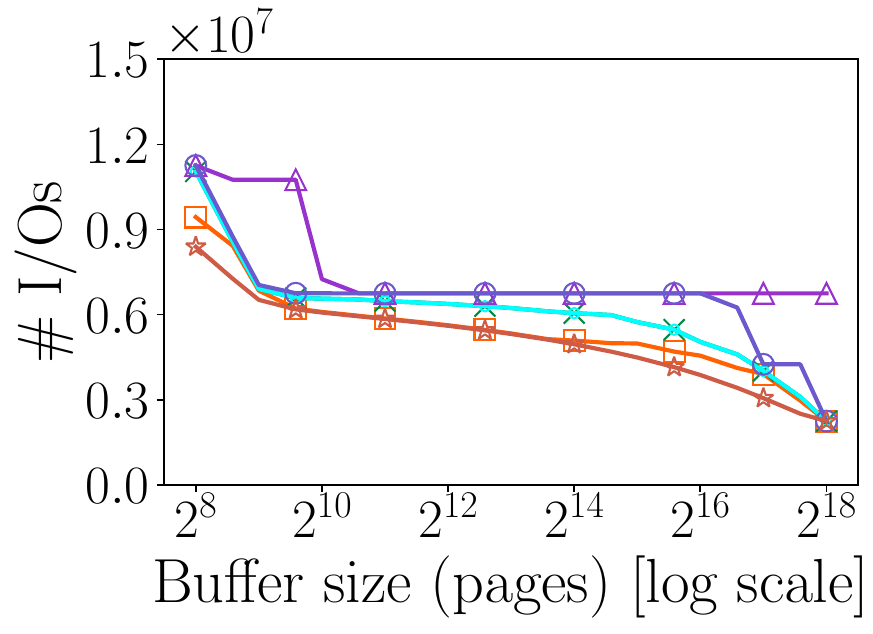}
         \caption{Zipf ($\alpha$=$0.7$)}
         \label{fig:varying-buffer-zipf-0.7-io}
     \end{subfigure}
     \hfill
     \begin{subfigure}{0.24\linewidth}
         \centering
         \includegraphics[width=\linewidth]{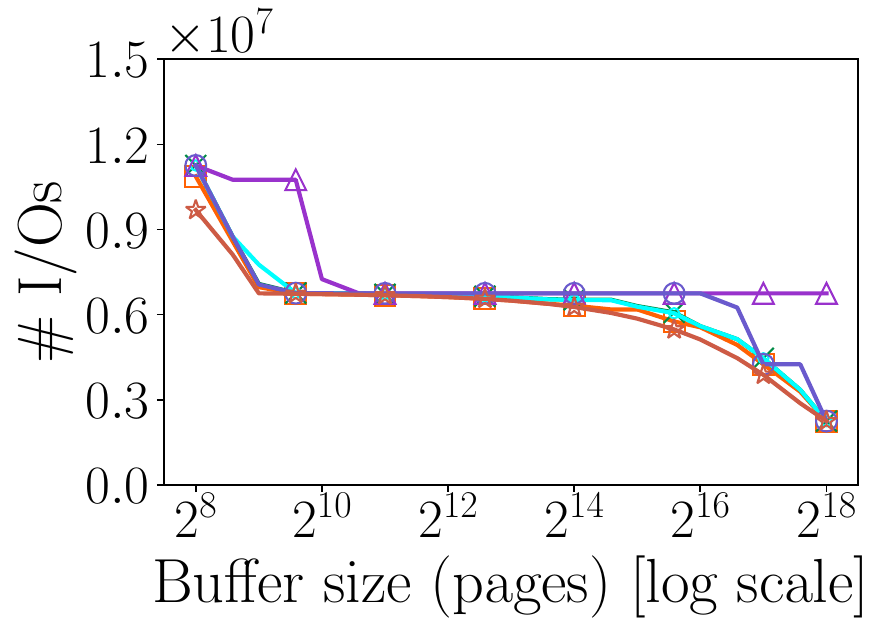}
         
         \caption{Uniform}
         \label{fig:varying-buffer-uni-io}
     \end{subfigure}
     \hfill
     \begin{subfigure}{0.24\linewidth}
         \centering
         \includegraphics[width=\linewidth]{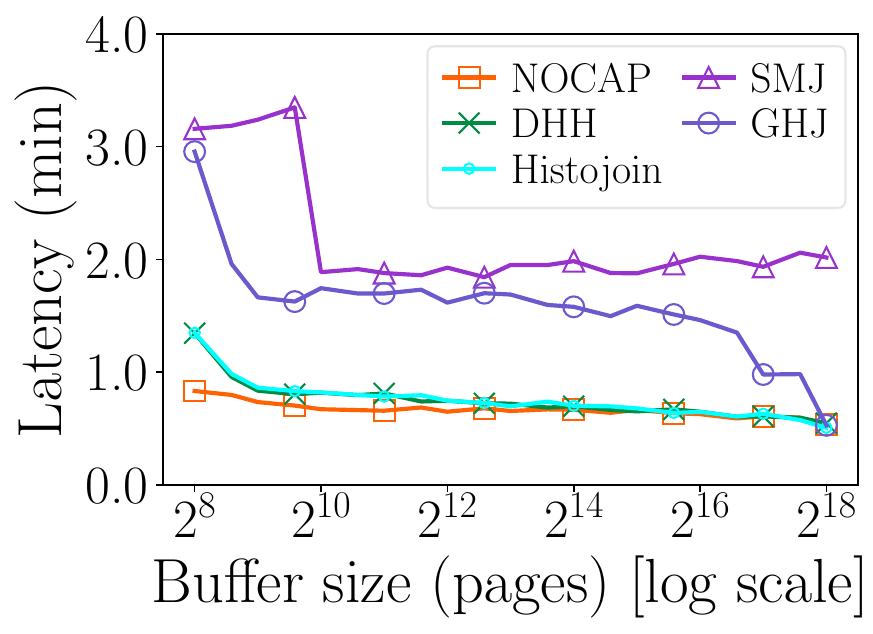}
         \caption{Zipf ($\alpha$=$1.3$) [w/o sync]}
         \label{fig:varying-buffer-zipf-1.3-lat}
         
     \end{subfigure}
     \hfill
      \begin{subfigure}{0.24\linewidth}
    
         \centering
         \includegraphics[width=\linewidth]{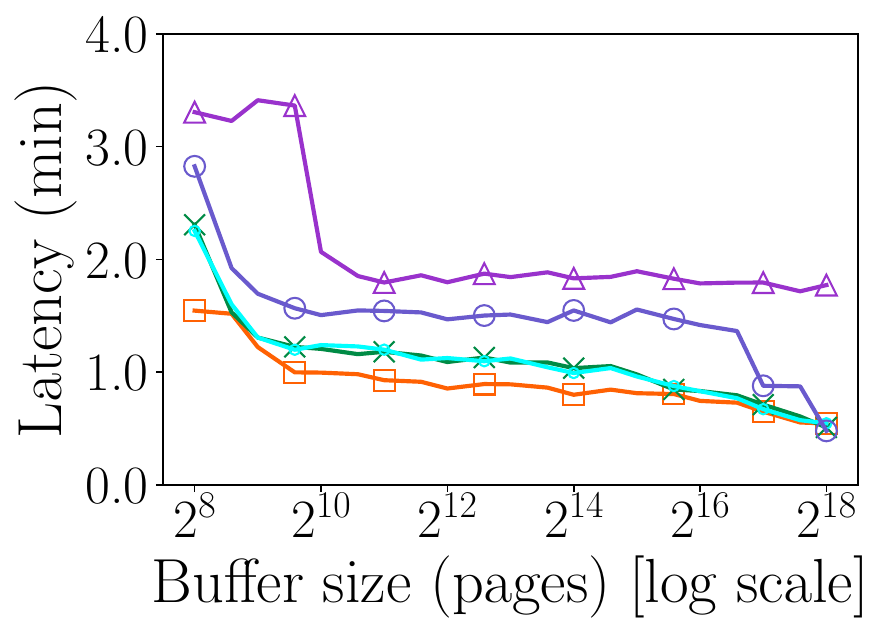}
        
         \caption{Zipf ($\alpha$=$1.0$) [w/o sync]}
         \label{fig:varying-buffer-zipf-1.0-lat}
     \end{subfigure}
     \hfill
  	\begin{subfigure}{0.24\linewidth}
  
         \centering
         \includegraphics[width=\linewidth]{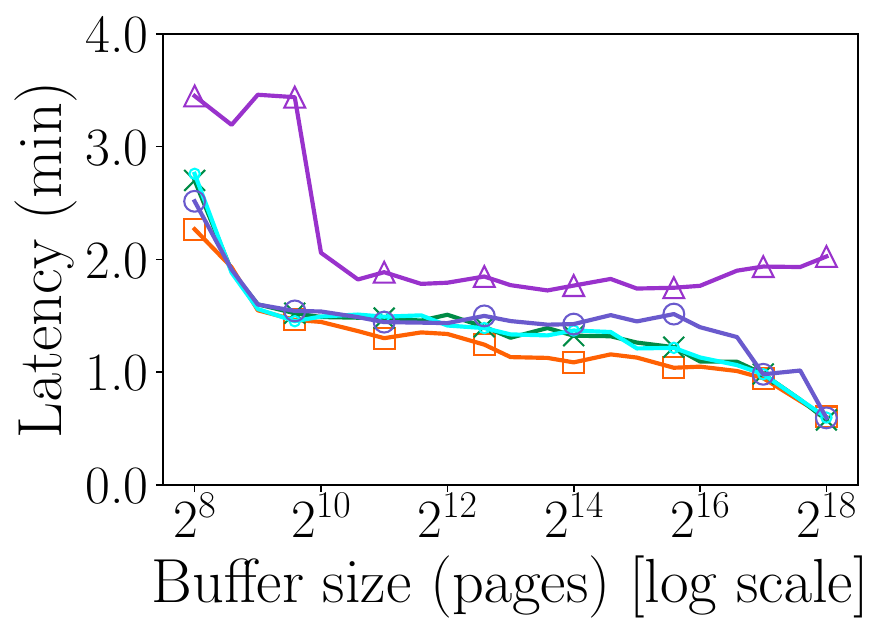}
         
         \caption{Zipf ($\alpha$=$0.7$) [w/o sync]}
         \label{fig:varying-buffer-zipf-0.7-lat}
     \end{subfigure}
     \hfill
     \begin{subfigure}{0.24\linewidth}
   
         \centering
         \includegraphics[width=\linewidth]{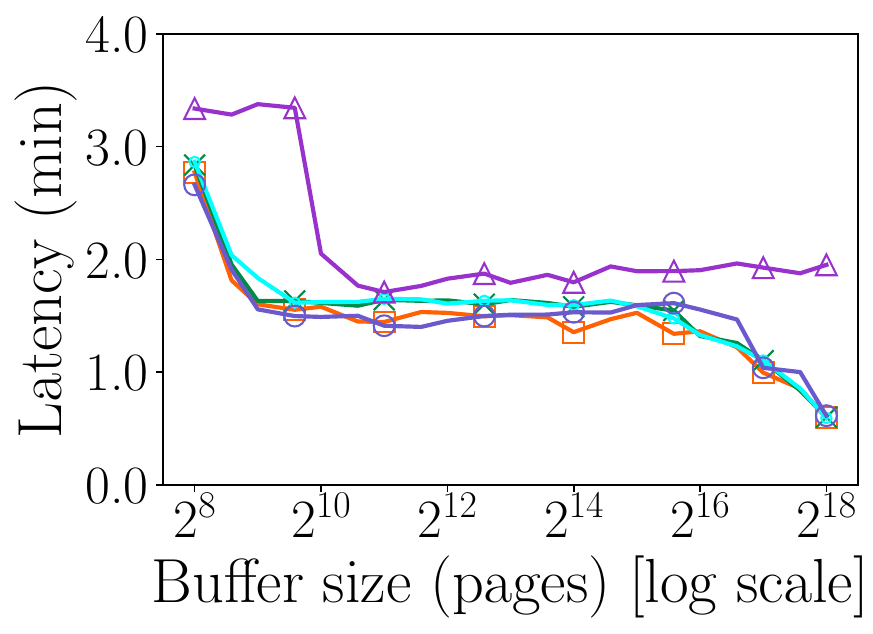}
         
         \caption{Uniform [w/o sync]}
         \label{fig:varying-buffer-uni-lat}
     \end{subfigure}
     \begin{subfigure}{0.24\linewidth}
         \centering
         \includegraphics[width=\linewidth]{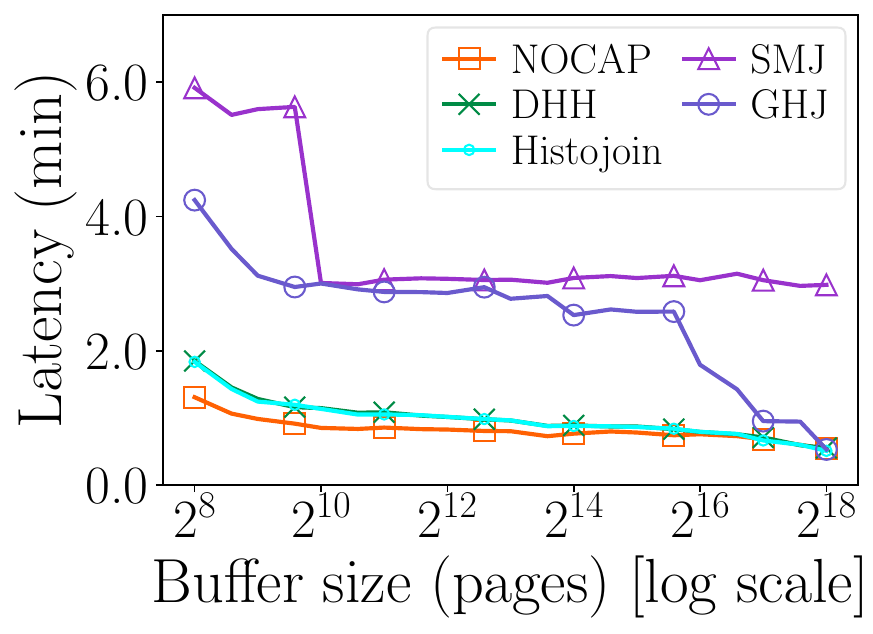}
        
         \caption{Zipf ($\alpha = 1.3$) [w/ sync]}
         \label{fig:varying-buffer-zipf-1.3-lat-sync}
          
     \end{subfigure}
     \hfill
      \begin{subfigure}{0.24\linewidth}
         \centering
         \includegraphics[width=\linewidth]{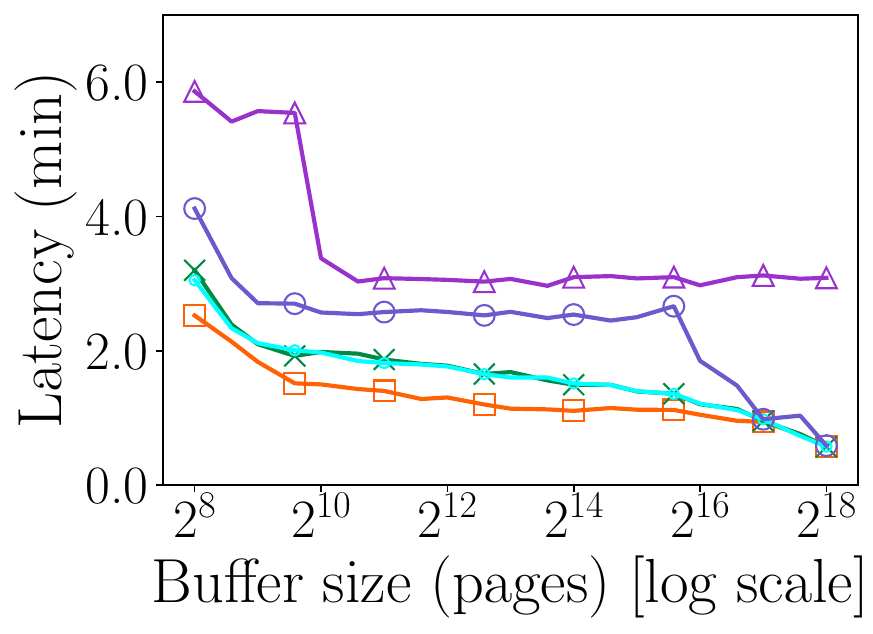}
        
         \caption{Zipf ($\alpha$=$1.0$) [w/ sync]}
         \label{fig:varying-buffer-zipf-1.0-lat-sync}
          
     \end{subfigure}
     \hfill
  	\begin{subfigure}{0.24\linewidth}
         \centering
         \includegraphics[width=\linewidth]{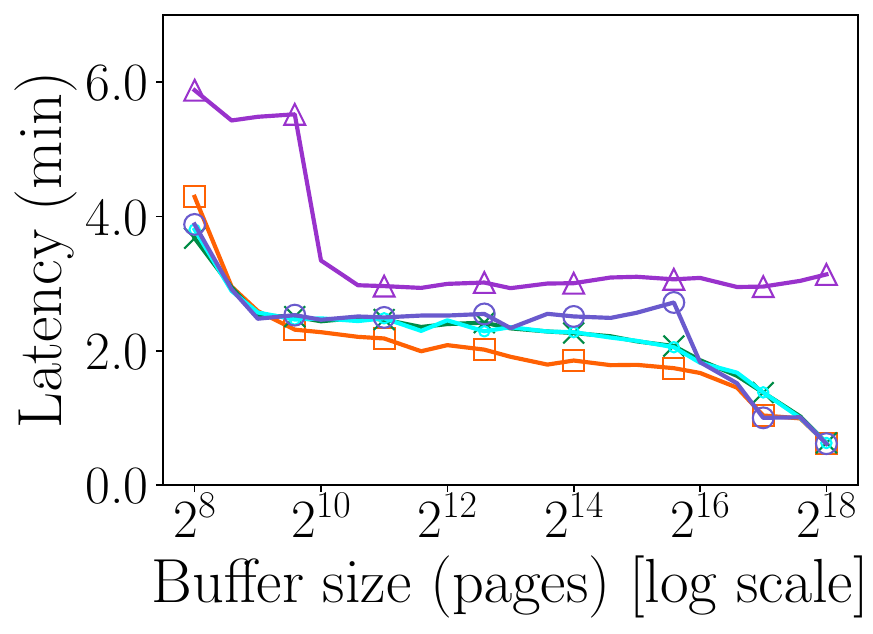}
         \caption{Zipf ($\alpha$=$0.7$) [w/ sync]}
         \label{fig:varying-buffer-zipf-0.7-lat-sync}
         
     \end{subfigure}
     \hfill
     \begin{subfigure}{0.24\linewidth}
         \centering
         \includegraphics[width=\linewidth]{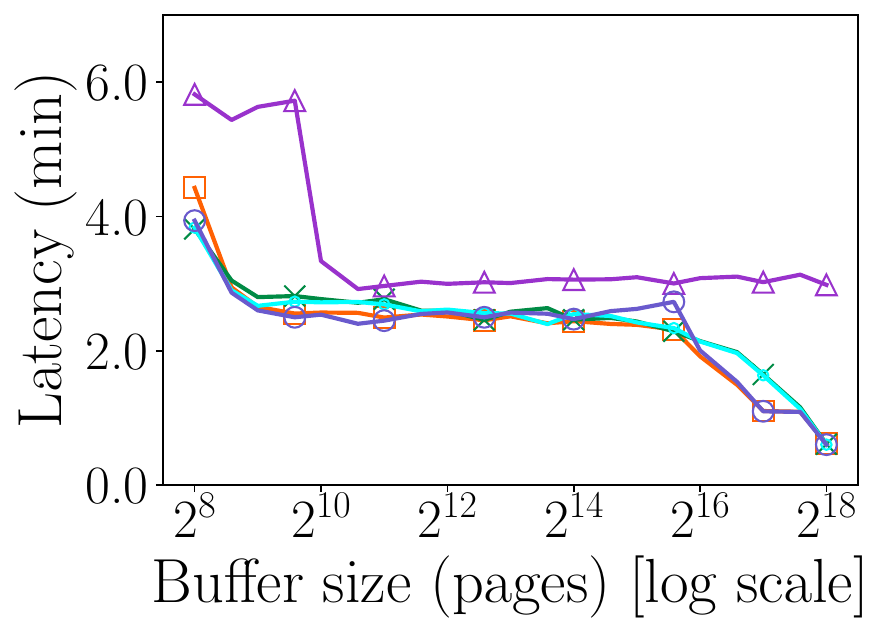}  
         \caption{Uniform [w/ sync]}
         \label{fig:varying-buffer-uni-lat-sync}
     \end{subfigure}
     \vspace{-0.1in}
     \caption{While state-of-the-art skew optimization in DHH helps reduce the I/O cost, \ApprAlgNameAbbr{} can better exploit the correlation skew to achieve even lower I/O cost and latency. The benefit of \ApprAlgNameAbbr{} is more pronounced with skew correlation.}
\label{fig:vary-skewness}
\end{figure*}

We now present experimental results for our practical method. 

\Paragraph{Approaches Compared.}
We compare \ApprAlgNameAbbr{} with Grace Hash Join (GHJ), Sort-Merge Join (SMJ), and Dynamic Hybrid Hash join (DHH).
For all the partitioning-based methods (GHJ, DHH, and ours), we apply a light optimizer that picks the most efficient algorithm according to Table~\ref{tab:cost_model} in the partition-wise join.
We also augment GHJ by allowing it to fall back to NBJ if the latter has a lower cost.
For DHH, we follow prior approaches that use fixed thresholds to trigger skew optimization (2\% of the available memory is used for an in-memory hash table for skew keys if the sum of their frequency is larger than 2\% of the relation size).
We also compare Histojoin (by setting the trigger frequency threshold as zero) as one of our baselines.
All the approaches are implemented in C++, compiled with gcc 10.1.0, in a CentOS Linux with kernel 4.18.0.
For \ApprAlgNameAbbr{}, DHH, and Histojoin, we assume top $k=50K$ frequently matching keys are tracked (i.e., 5\% of the keys when $n=1M$).
In PostgreSQL's implementation of DHH, the frequent keys are stored using a small amount of system cache, which does not consume the user-defined working memory budget unless they are inserted into the in-memory hash table.
Similarly, in our prototype, the top-$k$ keys are given.
Due to the pruning techniques (\S\ref{subsec:dynamic_programming}), computing the partitioning scheme (Alg.~\ref{alg:find_best_Mk}) with $k=50K$ takes less than one second, so we omit its discussion.

\subsection{Sensitivity Analysis}
\label{subsec:exp_synthetic_dataset}
We now experimental results with synthetic data produced by our workload generator that allows us to customize the distribution of matching keys between two input relation to test the robustness of different join methods.

\Paragraph{Experimental Setup.} We use our in-house server, which is equipped with 375GB memory in total and two Intel Xeon Gold 6230 2.1GHz processors, each having 20 cores with virtualization enabled. 
For our storage, we use a 350GB PCIe P4510 SSD with direct I/O enabled.
We can change the read/write asymmetry by turning the \verb|O_SYNC| flag on and off.
When \verb|O_SYNC| is on, every write is ensured to flush to disk before the write is completed, and thus has higher asymmetry ($\mu_{sync}=3.3$, $\tau_{sync}=3.2  $). 
When it is off, we have lower asymmetry $(\mu_{no\_sync}=1.28$, $\tau_{no\_sync}=1.2)$.
By default, sync I/O is off to accelerate joins.

\Paragraph{Experimental Methodology.}
We first experiment with a synthetic workload, which contains two tables $R$ and $S$ with $n_R=1M$ and $n_S=8M$.
The record size is $1$KB for both $R$ and $S$, and thus we have $\|R\|=250K$ pages and $\|S\|=2M$ pages (the page size is fixed as $4$KB in all the experiments).
In this experiment, we vary the buffer size from $0.5\cdot \sqrt{F\cdot \|R\|}\approx 256$ pages to $\|R\|=250K$ pages. 
We use \#I/Os (reads + writes) and latency as two metrics, and when comparing \#I/Os, we also run \ExactAlgName{} (\S\ref{subsec:hybrid-join}) to plot the optimal (lower bound) \#I/Os . 
Further, we run our experiments under uniform correlation and Zipfian correlation (short as Zipf) with $\alpha=0.7,1.0,1.3$.

\Paragraph{\ApprAlgNameAbbr{} Dominates for any Correlation Skew and any Memory Budget.}
For all the join methods, \#I/Os decreases as the buffer size increases, as shown Figures~\ref{fig:varying-buffer-zipf-1.3-io}-\ref{fig:varying-buffer-uni-io}.
We also observe that \ApprAlgNameAbbr{} achieves the lowest \#I/Os (near-optimal) among all the join algorithms for any correlation skew and any memory budget.
When we compare latency from Figures~\ref{fig:varying-buffer-zipf-1.3-lat}-\ref{fig:varying-buffer-uni-lat}, we also conclude that GHJ and SMJ have a clear gap while they have nearly the same \#I/Os regardless of the join correlation.
This is because random reads in SMJ are $1.2\times$ slower than sequential reads in GHJ.
While two traditional join methods have no optimization for skewed join correlations, DHH, Histojoin, and \ApprAlgNameAbbr{} can take advantage of correlation skew to alleviate \#I/Os.
However, DHH and Histojoin cannot fully exploit the data skew because it limits the space for high-frequency keys with a fixed threshold ($2\%$ of the total memory budget).
Compared to DHH and Histojoin, \ApprAlgNameAbbr{} normally has fewer \#I/Os because it freely decides the size of the in-memory hash table for frequent keys without any predefined thresholds.
Figures~\ref{fig:varying-buffer-zipf-1.0-lat}-\ref{fig:varying-buffer-uni-lat} 
show that the latency gap between \ApprAlgNameAbbr{} and DHH (Histojoin) is more pronounced when we have medium skew ($\alpha=1$ or $\alpha=0.7$), compared to the uniform correlation.
However, when the correlation is highly skewed ($\alpha=1.3$), the small hash table (2\% of the available memory) in DHH is enough to capture the most frequent keys and, hence, issue near-optimal \#I/Os (Figure~\ref{fig:varying-buffer-zipf-1.3-io}).
As a result, the benefit of \ApprAlgNameAbbr{} for $\alpha=1.3$ (Figure~\ref{fig:varying-buffer-zipf-1.3-lat}) is smaller than for $\alpha=1$ (Figure~\ref{fig:varying-buffer-zipf-1.0-lat}).
Nevertheless, \ApprAlgNameAbbr{} still issues 30\% fewer \#I/Os when the memory budget is small ($2^8$ pages).

\begin{figure*}[t]
\begin{minipage}{0.48\textwidth}
\begin{subfigure}{0.49\linewidth}
     \includegraphics[width=\linewidth]{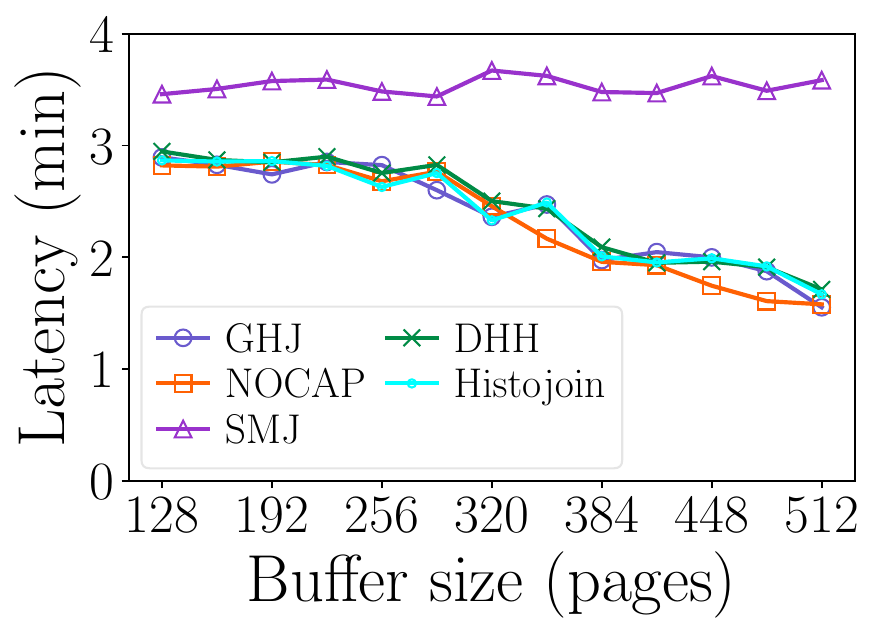}
     
         \caption{Uniform}
         \label{fig:tiny-mem-uni}
\end{subfigure}
\hfill
\begin{subfigure}{0.49\linewidth}
         \centering
         \includegraphics[width=\linewidth]{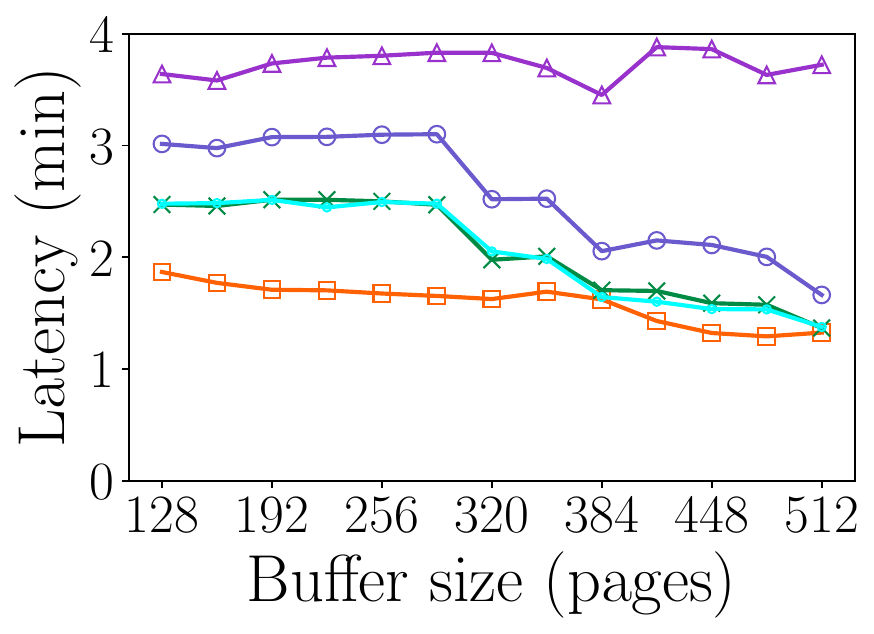}
         \caption{Zipf ($\alpha$=$1.0$)}
\label{fig:tiny-mem-zipf-alpha-1}
\end{subfigure}
\vspace{-0.15in}
\caption{When the memory is limited, \ApprAlgNameAbbr{} can even reach 10\% speedup with uniform workload.}
\label{fig:tiny-mem}
\end{minipage}
\hfill
\begin{minipage}{0.48\textwidth}
\centering
     \begin{subfigure}{0.49\linewidth}
\vspace{-1.16in}\includegraphics[width=\linewidth]{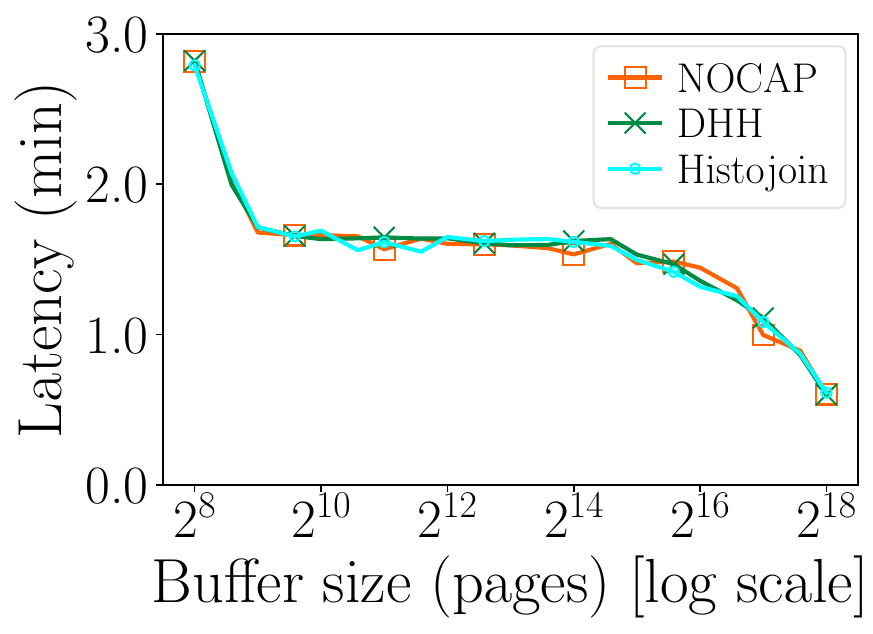}
         \caption{Uniform}
         \label{fig:noisy-uni}
     \end{subfigure}
     \hfill
  	\begin{subfigure}{0.49\linewidth}
         \centering
         \includegraphics[width=\linewidth]{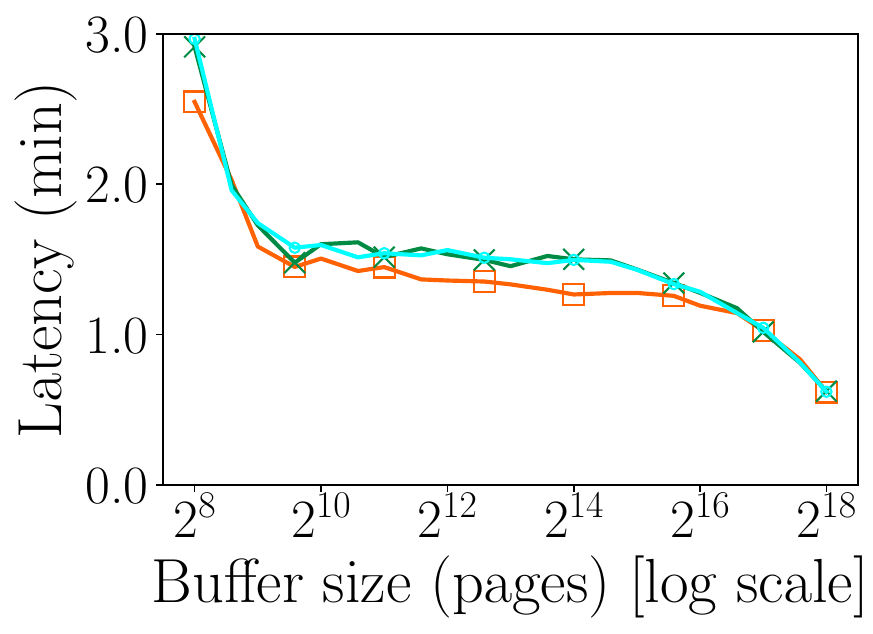}
         
         \caption{Zipf ($\alpha$=$0.7$)}
         \label{fig:noisy-zipf-alpha-0.7}
     \end{subfigure}
\vspace{-0.15in}
\caption{With noisy MCVs, both DHH and \ApprAlgNameAbbr{} have similar performance to Figures~\ref{fig:varying-buffer-uni-lat} and ~\ref{fig:varying-buffer-zipf-0.7-lat}.}
\label{fig:noisy-exp}
\end{minipage}
\end{figure*}

\begin{figure}[t]
    \centering
    \begin{subfigure}{0.3\linewidth}
         \centering
        \includegraphics[width=\linewidth]{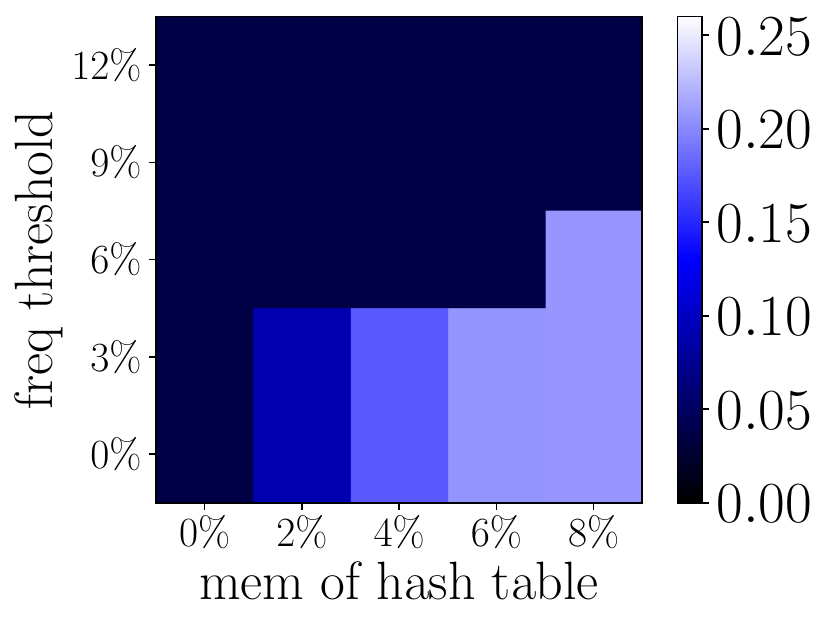}
         \caption{$B=2 MB$}
         \label{fig:DHH-vary-th-low-B-z0.7}
     \end{subfigure}
  	\begin{subfigure}{0.3\linewidth}
         \centering
        \includegraphics[width=\linewidth]{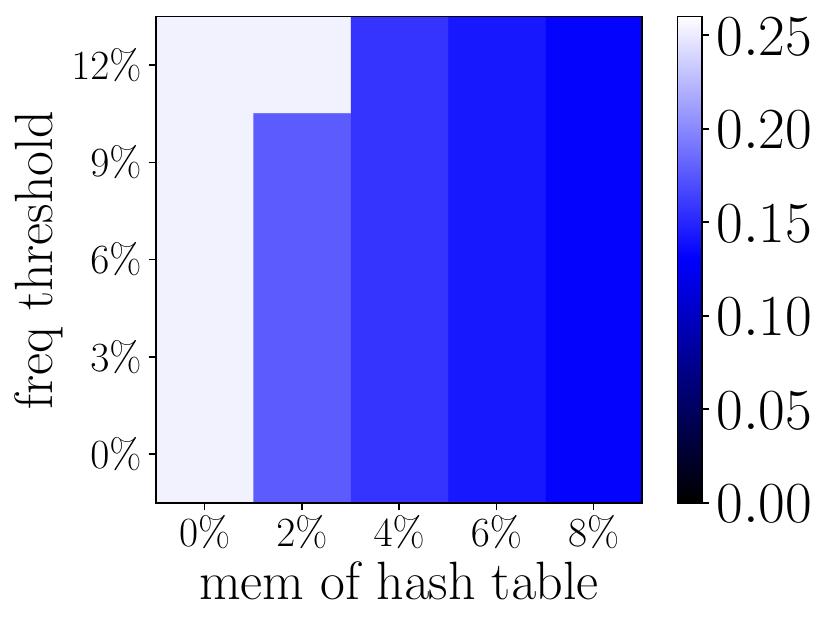}
        
         \caption{$B=32 MB$}
         \label{fig:DHH-vary-th-high-B-z0.7}
     \end{subfigure}
  \vspace{-0.12in}
     \caption{DHH requires careful tuning to achieve its best performance, which is still slower than \ApprAlgNameAbbr{}.}
\label{fig:vary-DHH-thresholds}
\end{figure}

\begin{figure*}[t]
\centering
     \begin{subfigure}{0.24\linewidth}
         \centering \includegraphics[width=\linewidth]{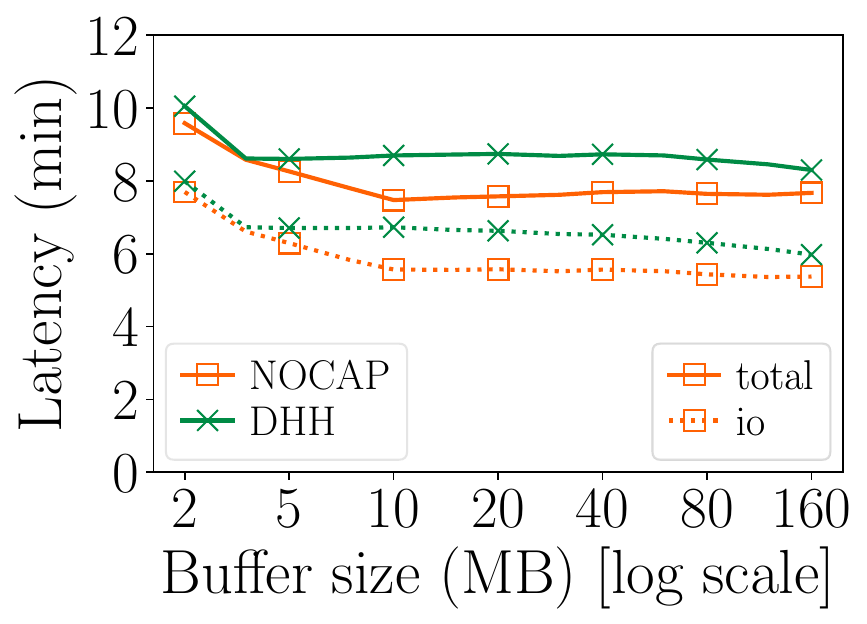}
          
         \caption{$\sigma_{S}$=$0.488, SF$=$10$}
         \label{fig:TPCH-uni}
     \end{subfigure}
     \hfill
  	\begin{subfigure}{0.24\linewidth}
         \centering
         \includegraphics[width=\linewidth]{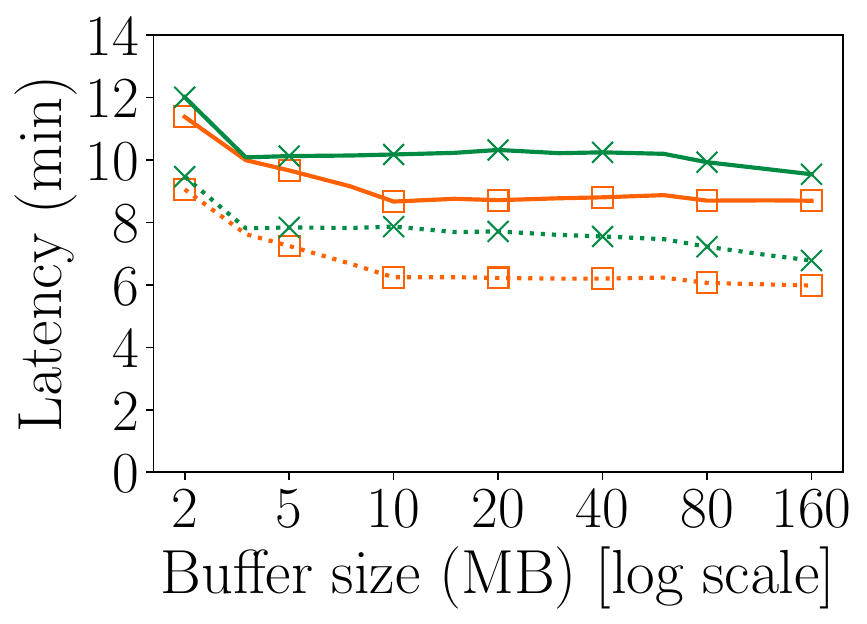}
         \caption{$\sigma_{S}$=$0.63, SF$=$10$}
         \label{fig:TPCH-skew}
     \end{subfigure}
    \hfill
     \begin{subfigure}{0.24\linewidth}
         \centering \includegraphics[width=\linewidth]{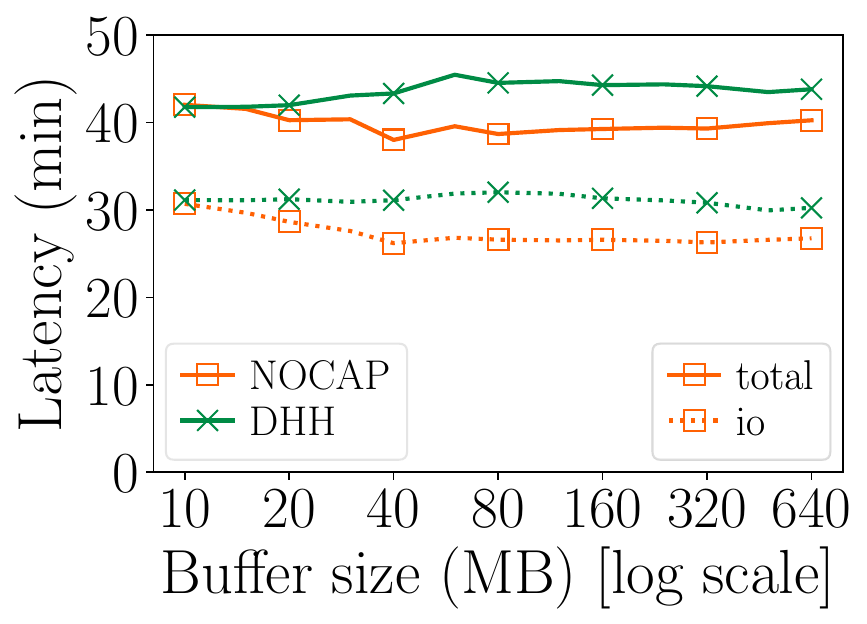}
        \caption{$\sigma_{S}$=$0.488, SF$=$50$}
         \label{fig:TPCH-uni-SF4}
     \end{subfigure}
     \hfill
  	\begin{subfigure}{0.24\linewidth}
         \centering
         \includegraphics[width=\linewidth]{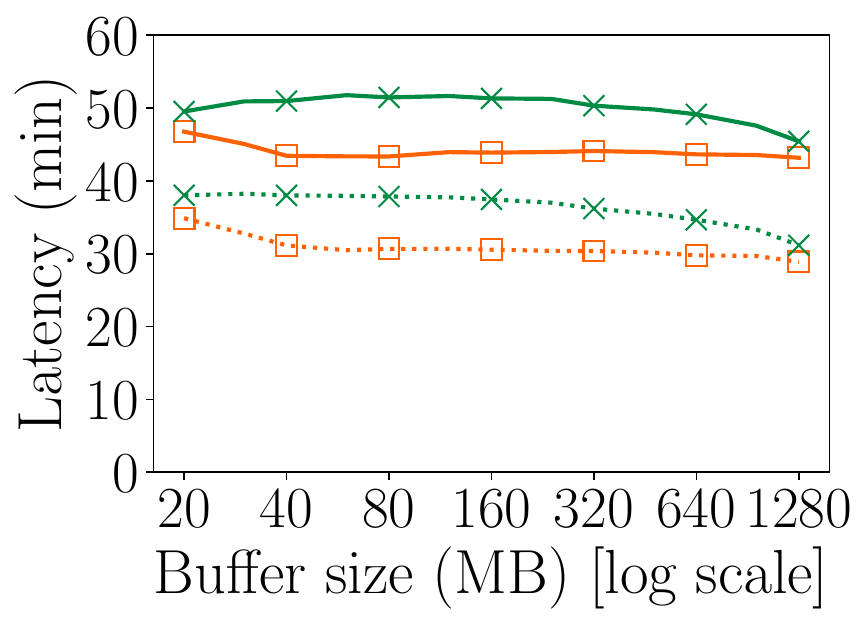}\caption{$\sigma_{S}$=$0.63, SF$=$50$}
         \label{fig:TPCH-skew-SF4}
     \end{subfigure}
\vspace{-0.1in}
\caption{TPC-H: \ApprAlgNameAbbr{} leads to higher speedup when more data from $S$ are joined in a query.}
\label{fig:TPCH}
\end{figure*}

\Paragraph{DHH Cannot Adapt to Different Memory Budget.} The skew optimization in DHH relies on two thresholds (memory budget and minimum frequency) and thus it cannot easily adapt to a different memory budget.
To verify this, we compare DHH with NOCAP under a given Zipfian correlation ($\alpha=0.7$) using as memory budgets $2$MB and $32$MB.
Figures~\ref{fig:DHH-vary-th-low-B-z0.7} and~\ref{fig:DHH-vary-th-high-B-z0.7} show the the percentage of reduced \#I/Os using NOCAP based on DHH when we vary the two thresholds used by DHH.
When $B=2$MB, the best DHH (the darkest cell) does not trigger skew optimization by setting a high frequency trigger threshold or the memory budget as zero.
In contrast, when $B=32$MB, the best DHH assigns 8\% memory to build the hash table for skewed keys.
Note that even though DHH can achieve close-to-\ApprAlgNameAbbr{} performance by varying these two thresholds, this relies on well-tuned parameters, which may not be feasible when the workload and the buffer size changes.

\Paragraph{\ApprAlgNameAbbr{} Outperforms DHH Even for Uniform Correlation with Small Memory Budget.}
We also examine the speedup when we have a lower memory budget.
Although Figures~\ref{fig:varying-buffer-uni-lat-sync} and~\ref{fig:varying-buffer-uni-lat} show that there \emph{seems} no difference between DHH and \ApprAlgNameAbbr{} when the workload is uniform, we actually capture a larger difference after we narrow down the buffer range (128$\sim$512 pages).
As shown in Figure~\ref{fig:tiny-mem-uni}, \ApprAlgNameAbbr{} can even achieve up to 15\% and 10\% speedup when there are 480 and 352 pages.
The step-wise pattern of DHH and GHJ comes from random (uniform) partitioning. 
When the memory is smaller than $\sqrt{\|R\|\cdot F}$, uniform partitioning easily makes each partition larger than the chunk size, as illustrated in Figure~\ref{fig:uniform-vs-non-uniform-partition}.
On the contrary, rounded hash allows some partitions to be larger so that other partitions have fewer chunks and thus require fewer passes to complete the partition-wise joins.
When the correlation becomes more skewed, \ApprAlgNameAbbr{} can be more than 30\% faster than DHH under limited memory, as shown in Figure~\ref{fig:tiny-mem-zipf-alpha-1}.

\Paragraph{DHH and \ApprAlgNameAbbr{} Both Exhibit Robustness for Noisy MCVs.}
To examine how noisy \CT{} can affect DHH and \ApprAlgNameAbbr{}, we add Gaussian noise to the \CT{} values, with the average noise as 0 (the value of the average noise has no impact over MCVs) and $\sigma=n_S/n_R$.
We reuse most of experimental settings so that $n_S/n_R=8$ and thus $\CT_{noise}[i] \in [\CT[i]-8, \CT[i]+8]$ with probability 68\%. 
We redo the experiments with uniform and Zipfian jon correlation ($\alpha=0.7$), and the results are shown in Figure~\ref{fig:noisy-exp}.
We do not observe a significant difference between the results with noisy $\CT$ values and the original results in Figures~\ref{fig:vary-skewness}.
The rational behind this is that keys with high frequency are still likely to be prioritized (cached) during hybrid partitioning even after we add the Gaussian noise, thus having very similar performance to the results without noise.

\subsection{Experiments with TPC-H, JCC-H, and JOB}
\label{subsec:exp_large_dataset}
We now experiment with TPC-H~\cite{TPCH}, JCC-H~\cite{Boncz2017}, and JOB~\cite{Leis2015}, using DHH as our main competitor.

\Paragraph{Experimental Setup.}
We experiment with a modified TPC-H Q12. Q12 selects data from table \verb|lineitem|, and joins them with \verb|orders| on \verb|l_orderskey=o_orderskey|, followed by  an aggregation.
As we have already seen, DHH has only a minor difference from \ApprAlgNameAbbr{} when the correlation is uniform. Here, we focus on a skew correlation in TPC-H data.
To achieve this, we modify the TPC-H dbgen codebase --
all the keys are classified into \emph{hot} or \emph{cold} keys, and the frequency of hot keys and cold keys, respectively, follows two different uniform distributions, which controls the overall skew and the average matching keys~\cite{Chatterjee2022,Bruck2006}.
In this experiment, we focus on a skew correlation where 0.5\% of the \verb|o_orderkey| keys match 500 \verb|lineitem| records on average and the rest keys only match 1.5 \verb|lineitem| records on average.

\begin{figure*}[t]
\centering
     \begin{subfigure}{0.24\linewidth}
         \centering \includegraphics[width=\linewidth]{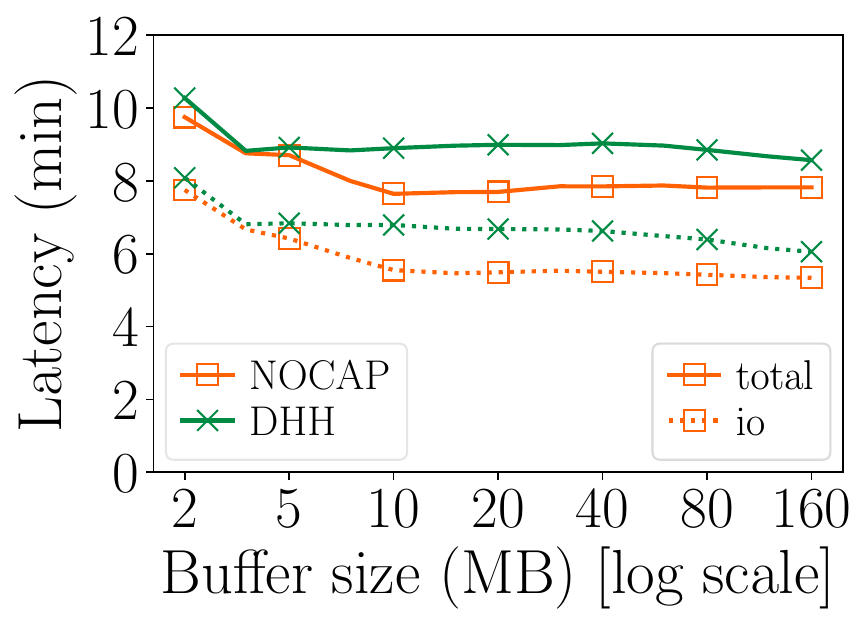}
         \caption{JCC-H (Tuned Skew)}
         \label{fig:JCC-H-tuned-skew}
     \end{subfigure}
     \hfill 
     \begin{subfigure}{0.24\linewidth}
         \centering \includegraphics[width=\linewidth]{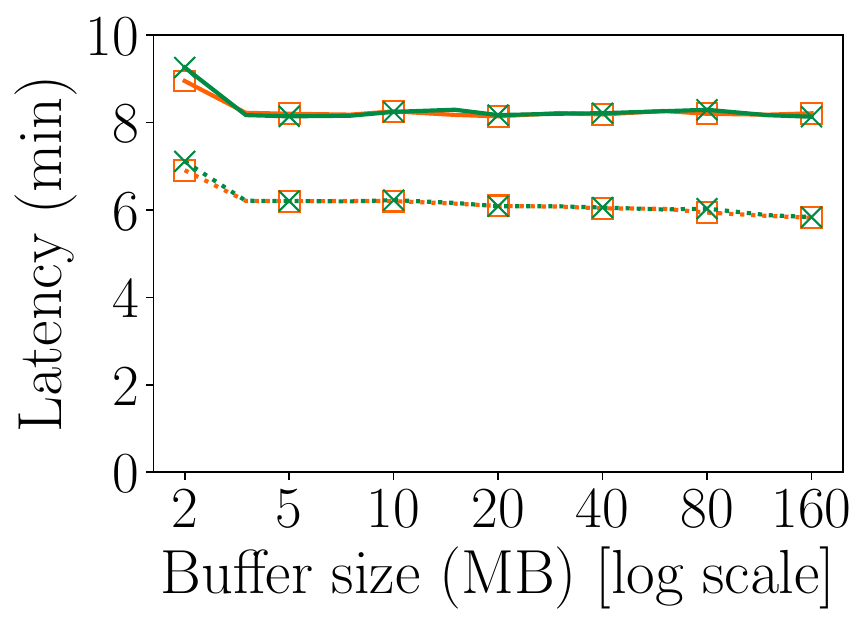}
         \caption{JCC-H (Original Skew)}
         \label{fig:JCC-H-origin-skew}
     \end{subfigure}
     \hfill
     \begin{subfigure}{0.24\linewidth}
         \centering
         \includegraphics[width=\linewidth]{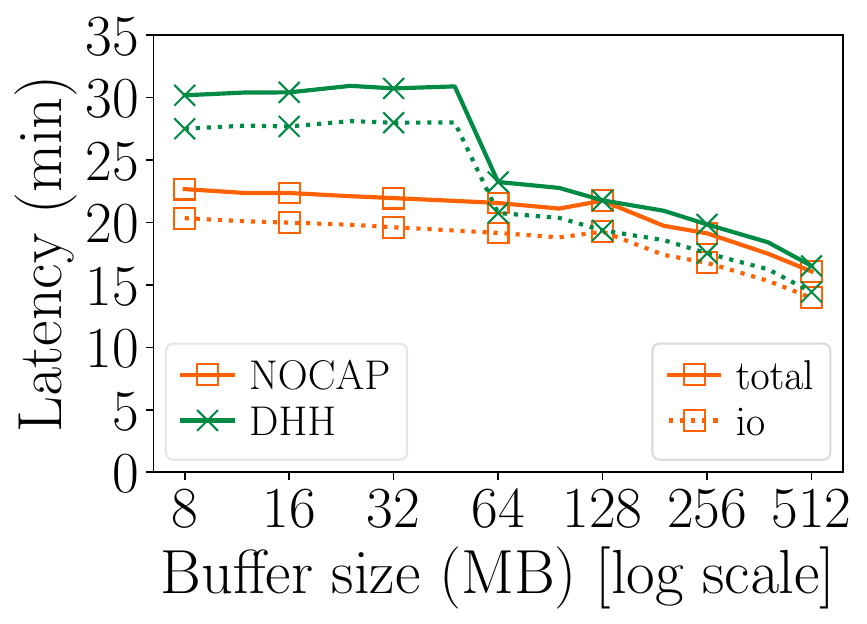}
         \caption{JOB (cast $\bowtie$ title)}
         \label{fig:JOB_cast_join_title}
     \end{subfigure}
     \hfill
  	\begin{subfigure}{0.24\linewidth}
         \centering
         \includegraphics[width=\linewidth]{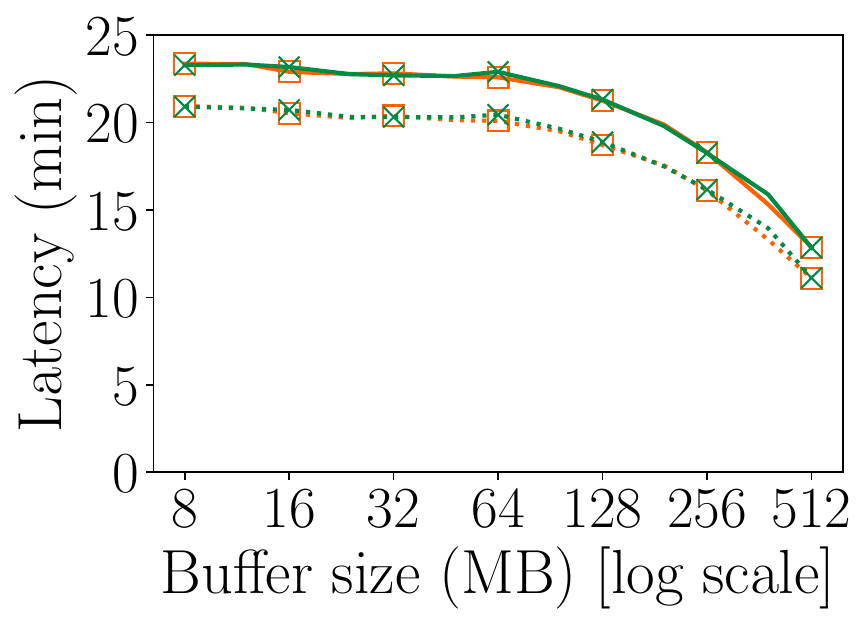}
         \caption{JOB (cast $\bowtie$ name)}
         \label{fig:JOB_cast_join_name}
     \end{subfigure}
    \vspace{-0.12in}
    \caption{While DHH can perform as close as \ApprAlgNameAbbr{}, \ApprAlgNameAbbr{} is more adaptive when the workload varies.}
\label{fig:other-dataset}
\end{figure*}

We further remove the filtering condition on \verb|l_shipmode| and \verb|l_receiptdate| so that the size of the filtered \verb|lineitem| is larger than \verb|orders| (since we focus on the case when $\|R\| < \|S\|$).
We then have two conditions left, \verb|l_receiptdate|<\verb|l_shipdate|
 and \verb|l_shipdate|<\verb|l_commitdate|, which has separately selectivity $0.488$ and $0.63$.
 We choose one of filter conditions to vary the selectivity.
We run our experiments with these two selectivity and two scale factors ($SF$=$10$, $SF$=$50$), as shown in Figure~\ref{fig:TPCH}.
In these experiments, we employ storage-optimized AWS instances (\textit{i3.4xlarge}) and we turn off \verb|O_SYNC|. The device asymmetry is $\mu=1.2$ and $\tau=1.14$.

\Paragraph{\ApprAlgNameAbbr{} Accelerates TPC-H Joins.}	We observe that the difference in the total latency between \ApprAlgNameAbbr{} and DHH is not so large as we see in previous experiments on skew correlation in Figure~\ref{fig:TPCH-uni}.
In fact, we find that the proportion of time spent on I/Os is lower in the TPC-H experiment due to extra aggregations in Q12.
In earlier experiments (e.g., Figures~\ref{fig:varying-buffer-zipf-1.0-lat} and ~\ref{fig:varying-buffer-zipf-0.7-lat}), the time spent on CPU is 10$\sim$20 seconds out of a few minutes (the total latency). 
In contrast, we observe that in both Figures~\ref{fig:TPCH-uni} and ~\ref{fig:TPCH-skew}, the proportion of time spent on I/Os is less than 85\%.
This explains why the speedup of \ApprAlgNameAbbr{} diminishes when we have more CPU-intensive operations (e.g., aggregation) in a query.
In addition, when we adjust the selectivity so that more data from \verb|lineitem| are involved in the join, we observe a higher benefit in Figure ~\ref{fig:TPCH-skew}.
We have similar observation when we move to a larger data set (SF=50), as shown in Figures~\ref{fig:TPCH-uni-SF4} and~\ref{fig:TPCH-skew-SF4}.
In fact, in both cases, when we have larger \verb|lineitem| data involved in the join, \ApprAlgNameAbbr{} usually leads to higher speedup.

\Paragraph{\ApprAlgNameAbbr{} Outperforms DHH for JCC-H and JOB.} To further validate the effectiveness of \ApprAlgNameAbbr{}, we experiment with JCC-H (Join Cross Correlation) and JOB (Join Order Benchmark).
JCC-H is built on top of TPC-H with adding join skew, so we reuse our TPC-H experimental setup (running the modified Q12 with $SF=10$ and $\sigma_S=0.488$).
In the original JCC-H dataset, the majority (i.e.,~$300K\cdot SF$) of \verb|lineitem| records only join with 5 \verb|orders| records.
To vary the skew, we tune the code by allowing around $5100\cdot SF$ \verb|orders| records to match 600 \verb|lineitem| records on average.
In addition to JCC-H, we also compare DHH and \ApprAlgNameAbbr{} by executing a PK-FK join between two tables from the JOB dataset.
Specifically, \verb|cast_info| stores the cast info between movies and actors, which are respectively stored in tables \verb|title| and \verb|name|.
The number of movies per actor is highly skewed where the top 50 actors match 0.6\% records in \verb|cast_info|, while the number of actors per movie is less skewed where the top 50 movies match less than 0.1\% records in \verb|cast_info| (``top'' in terms of the number of occurrences in \verb|cast_info|).
As we can observe in Figures~\ref{fig:other-dataset}a-d, when the correlation is extremely skewed (i.e., the original JCC-H and \verb|cast_info| $\bowtie$ \verb|name|), DHH with fixed thresholds can identify the skew and exploit it to achieve close-to-\ApprAlgNameAbbr{} performance. However, when it comes to medium skewed (i.e., tuned JCC-H and \verb|cast_info| $\bowtie$ \verb|title|), DHH with fixed thresholds results in suboptimal performance compared to \ApprAlgNameAbbr{}.

\section{Discussion}
\label{sec:discussion}
\Paragraph{General (Many-to-Many) Joins.} 
In many-to-many joins, we have two $\CT$ lists that store the frequency per key in each relation, and thus we need to create a new cost function (with two $\CT$ lists) to use as the main objective function.
Despite that, we can still reuse our dynamic programming algorithm~\ref{alg:matrix_dp} and replace $\textsf{CalCost}$ by considering extra $\CT$ values, however, the error bound is no longer guaranteed since our main theorem does not hold.
In practice, even this crude approximation of the optimal partitioning for general joins is worth exploring since the CPU cost of enumerating partitioning schemes is very low, and it covers the design space of DHH with variable values for its two thresholds. As a result, \ApprAlgNameAbbr{} is still expected to outperform DHH for general joins. We leave the complete formulation for many-to-many joins as future work.

\Paragraph{On-the-fly Sampling.} On-the-fly sampling can give us more accurate \CT~ values because it takes into account the applied predicate while MCVs rely only on the estimated selectivity.
However, relying on sampling without knowledge of MCVs makes the partitioning operation not pipelineable with the scan/select operator.
This is because partitioning relies on the exact join correlation, but in case of sampling we need one complete pass first to acquire this information.
In this case, we may downgrade \ApprAlgNameAbbr{} into DHH to avoid the additional pass.
Another approach proposed by Flow-join~\cite{Rodiger2016} is to employ a streaming sketch to obtain an approximate distribution during partitioning.
However, this requires scanning $S$ first to build the histogram, which conflicts with partitioning $R$ first in the hybrid join.
On the other hand, if we already have MCVs (note that it is acceptable to be noisy), sampling on-the-fly to enable sideways information passing (SIP)~\cite{Orr2019} can be helpful.
Specifically, we can sample $R$ on-the-fly during partitioning and build a Bloom Filter, which is later used when partitioning/scanning $S$. The additional Bloom Filter makes the estimated \CT~ more accurate, and thus the partitioning generated by \ApprAlgNameAbbr{} can get closer to expected \#I/O cost.
However, adding a BF in the memory also introduces new trade-offs between performance and memory consumption. 

\section{Related Work}
\label{sec:related_work}

\Paragraph{In-Memory Joins.} When the memory is enough to store the entire relations, in-memory join algorithms can be applied~\cite{Shatdal1994,Liu2015,Schuh2016,Pohl2020,Bandle2021,Barber2014}.
Similar to storage-based joins, in-memory join algorithms can be classified as hash-based and sort-based (e.g., MPSM~\cite{Albutiu2012} and MWAY~\cite{Balkesen2013a}).
Hash joins can be further classified into partitioned joins (e.g., parallel radix join~\cite{Balkesen2013a}) and non-partitioned joins (e.g., CHT~\cite{Balkesen2013}).
GPUs can also be applied to accelerate in-memory joins~\cite{Rui2015,Guo2019,Lutz2020}.
Although storage-based joins focus on \#I/Os, when two partitions both fit in memory after the partitioning phase, NOCAP can still benefit from optimized in-memory joins.

\Paragraph{Distributed Equi-Joins.} In a distributed environment, the efficiency of equi-joins also relies on load balance and communication cost.
When the join correlation distribution is skewed, uniform partitioning may overwhelm some workers by assigning excessive work to do~\cite{Walton1991}.
More specifically, when systems like Spark~\cite{Zaharia2012} rely on in-memory computations, the overall performance drops when the transmitted data for a machine do not fit in  memory.
Several approaches~\cite{Xu2008,Li2018,Beame2014,Rodiger2016} are proposed to tackle the data (and thus, workload) skew for distributed equi-joins.
 
\section{Conclusion}
In this paper, we propose a new cost model for partitioning-based PK-FK joins that allows us to find the optimal partitioning strategy assuming accurate knowledge of the join correlation. 
Using this optimal partitioning strategy, we show that the state-of-the-art Dynamic Hybrid Hash Join (DHH) yields sub-optimal performance since it does not fully exploit the available memory and the correlation skew. 
To address DHH's limitations, we develop a practical near-optimal partitioning-based join with variable memory requirements (\ApprAlgNameAbbr{}) which is based on  \ExactAlgName{}. We show that \ApprAlgNameAbbr{} dominates the state-of-the-art for any available memory budget and for any join correlation, leading to up to 30\% improvement.

\begin{acks}
This material is based upon work supported by the National Science Foundation under Grant No. IIS-2144547, a Facebook Faculty Research Award, and a Meta Gift.
\end{acks}

\newpage
\bibliographystyle{ACM-Reference-Format}
\bibliography{library,anon}

\appendix
\section{Complexity Analysis for Monotonicity-based Pruining}
Based on two pruning inequalities (i.e., \textbf{Pruning 1} and \textbf{Pruning 2}) from weakly-ordered property, the range for $k$ is at most $\frac{i}{j}-\left\lfloor \frac{n-i-1}{m-j}\right\rfloor - 1 + (2 - \frac{1}{j})\cdot c_R$. 
If $\frac{i}{j}-\left\lfloor \frac{n-i-1}{m-j}\right\rfloor - 1 + (2-\frac{1}{j})\cdot c_R> \frac{i}{j}$, we can even skip the specific $j$ (line 7 in Algorithm~\ref{alg:matrix_dp}) because we do not have to iterate any $k$ in this case.
Therefore, the complexity for the third loop (line $7\sim 11$) is 
\begin{equation}
	\begin{split}
		& \max\left\{\frac{i}{j}-\left\lfloor \frac{n-i-1}{m-j}\right\rfloor - 1 + \left(2 - \frac{1}{j}\right)\cdot c_R, 0\right\}
		\leq \max\left\{\frac{i}{j}-\left\lfloor \frac{n-i-1}{m-j}\right\rfloor - 1, 0\right\} + \left(2 - \frac{1}{j}\right)\cdot c_R
	\end{split}
\end{equation}

Summing up the complexity for all possible $i$ and $j$, we have:

\begin{equation}
\begin{split}
\sum\limits_{i=1}^{n} \sum\limits_{j=2}^{m-1}\max\left\{\frac{i}{j}-\left\lfloor \frac{n-i-1}{m-j}\right\rfloor - 1, 0\right\} 
\leq & \sum\limits_{i=1}^{n}\sum\limits_{j=2}^{m-1}\max\left\{\frac{i}{j}- \frac{n-i}{m-j}, 0\right\} = \sum\limits_{j=2}^{m-1}\sum\limits_{i=\frac{j\cdot n}{m}}^{n} \frac{i\cdot m - n\cdot j}{j\cdot (m-j)} \\
= & \sum\limits_{j=2}^{m-1}\frac{\frac{n\cdot j+n\cdot m}{2}\cdot\left(n - \frac{n \cdot j}{m}+1\right)- n \cdot j\left(n - \frac{n\cdot j}{m}+1\right)}{j\cdot (m-j)} \\
= & \sum\limits_{j=2}^{m-1}\frac{\frac{n\cdot (m-j)}{2}\cdot \left(n - \frac{n \cdot j}{m}+1 \right) }{j\cdot (m-j)} = \sum\limits_{j=2}^{m-1}\frac{n\cdot \left(n - \frac{n \cdot j}{m}+1\right)}{2j} \\
= & \frac{1}{2}\left(\sum\limits_{j=2}^{m-1}n^2\cdot \left(\frac{1}{j} - \frac{1}{m}\right)+\sum\limits_{j=2}^{m-1}\frac{n}{j}\right) = O(n^2 \log m) \\
\sum\limits_{i=1}^{n}  \sum\limits_{j=2}^{m-1} \left(2 - \frac{1}{j}\right)\cdot c_R \leq \sum\limits_{i=1}^{n} 2 \cdot (m-2) \cdot c_R & = O(n m^2)
\end{split}
\label{eq:monotonicity-complexity-analysis}
\end{equation}
We typically consider $c_R \propto m (m << n)$ in our paper (the number of buffer pages is much smaller than the number of input records to be partitioned, otherwise we can simply put all the records from $R$ into buffer to execute \textsf{NBJ} without partitioning), and thus the computation complexity is dominated by $O(n^2 \log m)$.

\section{Proof of Weakly-Ordered and Divisible Property}
\Paragraph{Weakly-Ordered Property.} The weakly-ordered property can be formalized using Corollary~\ref{ceiling_size_decreasing_corollary}.
\begin{corollary}
	There exists an optimal partitioning scheme $P_{opt}$ where all the partitions contain consecutive elements and the sequence $\left\lceil \frac{\|\mathcal{N}(i)\|}{c_R}\right\rceil$ is in decreasing order.
	\label{ceiling_size_decreasing_corollary}
\end{corollary}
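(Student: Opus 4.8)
The plan is to observe that Corollary~\ref{ceiling_size_decreasing_corollary} is exactly the conjunction of the consecutive property and the weakly-ordered property of Theorem~\ref{the:main}, where $\|\mathcal N(i)\|$ is a monotone function of the size of the partition holding record $i$. Since the consecutive part is already settled by \textsc{SWAP} (Algorithm~\ref{alg:swap}) together with the Swap Lemma (Lemma~\ref{lem:swap}), I would start from an optimal partitioning $\P$ that is already consecutive on the sorted \CT, and drive it --- without increasing $\JC$ and without breaking consecutiveness --- to one whose partition sizes are weakly descending along the induced order $\precsim$. The argument is an exchange argument with the potential function $\Phi(\P) = \#\{(j',j) : P_{j'} \precsim P_j,\ |P_{j'}| < |P_j|\}$: I show that one repair step strictly decreases $\Phi$ while preserving optimal cost and consecutiveness. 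Because every operation used is a record swap, the multiset of partition sizes --- hence the finite range of $\Phi$ --- is invariant, so the process terminates at a partitioning with $|P_1| \ge |P_2| \ge \cdots \ge |P_m|$, which in particular makes the sequence $\lceil\|\mathcal N(i)\|/c_R\rceil$ non-increasing in $i$.

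The repair step is \textsc{ORDER} (Algorithm~\ref{alg:sort}). If $\Phi(\P) > 0$ there is an \emph{adjacent} inverted pair $P_{j'} \precsim P_j$ (any inversion forces an adjacent one, and $\lceil|P_{j'}|/c_R\rceil < \lceil|P_j|/c_R\rceil$ forces $|P_{j'}| < |P_j|$ strictly), and since $\P$ is consecutive with \CT\ sorted ascending, the block of $P_{j'}$ lies entirely below that of $P_j$, so every index of $P_{j'}$ carries a no-larger \CT\ value than every index of $P_j$. I would swap the $|P_{j'}|$ records of $P_{j'}$ against the first $|P_{j'}|$ records of $P_j$, one transposition at a time; each transposition sends a $P_j$-record (larger \CT) into $P_{j'}$ and a $P_{j'}$-record (smaller \CT) into $P_j$, and since a transposition leaves partition sizes fixed, throughout the pass the relevant $|\mathcal N_f(\cdot)|$ values stay equal to $|P_j|$ and $|P_{j'}|$ with $|P_j| > |P_{j'}|$, so Lemma~\ref{lem:swap} applies to every transposition and $\JC$ never goes up (a swap of two records sharing a \CT\ value simply leaves the cost unchanged). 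After the pass $P_{j'}$ is once more a consecutive block sitting at the front of $P_j$'s old span, while $P_j$ is now non-consecutive --- it has a hole where the new $P_{j'}$ sits --- so I re-invoke \textsc{SWAP} on $P_j$, which restores consecutiveness with no cost increase by the correctness of Algorithm~\ref{alg:swap} already established.

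It then remains to verify that this re-run of \textsc{SWAP} produces exactly the ``clean transposition'': $P_j$ is pulled down to fill the lowest $|P_j|$ free indices of its span, $P_{j'}$ is pushed up to the top $|P_{j'}|$ indices, and no other partition (nor its size) is disturbed. Granting this, one \textsc{ORDER} iteration merely places the larger of the two partitions in front of the smaller in the $\precsim$ order, so $\Phi$ drops by exactly one; the loop therefore halts after at most $\binom{m}{2}$ iterations, and the final partitioning is optimal, consecutive, and size-descending --- hence the $\lceil\cdot/c_R\rceil$-monotone statement of the corollary follows. (If one wants the statement to coexist with the divisible property one retains only the $\lceil\cdot/c_R\rceil$ version, but the stronger size-descending version holds and suffices here.)

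The hard part is precisely that last bookkeeping step: showing that untangling the holed $P_j$ with \textsc{SWAP} cannot perturb the positions or sizes of partitions neighboring $P_{j'}$ and $P_j$ in a way that creates fresh inversions or invalidates $\Phi$ as a potential. I would discharge it by tracing the \textsf{left}/\textsf{right} pointers of Algorithm~\ref{alg:swap} on the holed block directly: because the pair is adjacent in $\precsim$, the only ``foreign'' indices inside $P_j$'s current span are the $|P_{j'}|$ indices now owned by $P_{j'}$, all strictly below $P_j$'s tail, so each iteration takes the \textsf{right}-branch, migrates one such index out to the top of the span and pulls the smallest foreign index into $P_j$, leaving everything outside the span untouched and all partition sizes fixed; after $|P_{j'}|$ iterations the span's bottom $|P_j|$ indices form $P_j$ and its top $|P_{j'}|$ indices form $P_{j'}$, as claimed.
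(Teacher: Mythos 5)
Your proposal is correct and follows essentially the same route as the paper: the paper also proves the (stronger) size-descending version by starting from a consecutive optimal partitioning and repeatedly applying the Swap Lemma to adjacent out-of-order partition pairs until the larger partition sits entirely before the smaller one, with consecutiveness restored after each exchange. Your explicit inversion-counting potential and the pointer-level trace of \textsc{SWAP} on the ``holed'' partition supply termination and bookkeeping details that the paper leaves implicit, but they formalize the same argument rather than replace it.
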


In fact, we can have a stronger Corollary ~\ref{size_decreasing_corollary} as follows.
\begin{corollary}
	\label{size_decreasing_corollary}
	There exists an optimal partitioning scheme $P$ where all the partitions contain consecutive elements w.r.t. their $CT$ value, and the partition size sequence $\{\|\mathcal{N}(i)\|\}$ is in decreasing (\textit{monotonic non-increasing}) order.
\end{corollary}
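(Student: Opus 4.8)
The plan is to derive Corollary~\ref{size_decreasing_corollary} by an exchange argument layered on top of the consecutive property already established in Theorem~\ref{the:main}. First I would take any optimal partitioning and, if necessary, apply Algorithm~\ref{alg:swap} to make it consecutive, then list its partitions in the partial order $\precsim$ as $P_1 \precsim P_2 \precsim \cdots \precsim P_m$, so they occupy contiguous left-to-right blocks of the sorted $\CT$ array. The goal is to drive the size sequence $(|P_1|,\dots,|P_m|)$ into non-increasing order without ever increasing the (relaxed) join cost $\JC$. The single rewriting step I would use is a \emph{boundary shift} between two \emph{adjacent} partitions: whenever $P_a \precsim P_{a+1}$ but $|P_a| < |P_{a+1}|$, I re-split the contiguous block $P_a \cup P_{a+1} = [s,u]$ of $\CT$ into a left part of size $|P_{a+1}|$ followed by a right part of size $|P_a|$. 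This manifestly preserves the consecutive property (only one internal boundary moves) and is exactly an adjacent transposition of the size sequence, so it repairs one inversion.

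The crux is the cost comparison for a single boundary shift. Writing $p=|P_a|$, $q=|P_{a+1}|$ with $p<q$ and $p+q=u-s+1$, only the two affected partitions change their $\JC$-contribution, and a direct computation gives
\[
\JC(\mathrm{old}) - \JC(\mathrm{new}) \;=\; \delta\cdot\left(\left\lceil \tfrac{q}{c_R}\right\rceil - \left\lceil \tfrac{p}{c_R}\right\rceil\right), \qquad \delta \;=\; \tfrac{1}{b_S}\Big( \sum_{i=u-p+1}^{u}\CT[i] \;-\; \sum_{i=s}^{s+p-1}\CT[i]\Big),
\]
where the identity that the contribution coming from $P_a$ and from $P_{a+1}$ collapse into the same $\delta$ is just conservation of $\sum_{i\in[s,u]}\CT[i]$ under the re-split. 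Now $\delta$ is the $\CT$-weight of the rightmost $p$ records of the block minus that of the leftmost $p$ records; since $\CT$ is sorted ascending and these two index ranges are disjoint with the first lying entirely to the left of the second, $\delta\ge 0$. And $p<q$ forces $\lceil q/c_R\rceil\ge\lceil p/c_R\rceil$. The product of two non-negative quantities is non-negative, hence $\JC(\mathrm{new})\le\JC(\mathrm{old})$, while the $\|R\|$, $+m$, and $n_R/c_R$ terms are untouched; so the boundary shift preserves optimality and consecutiveness.

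It remains to run this to termination. I would take as potential the number of inversions of $(|P_1|,\dots,|P_m|)$ with respect to the non-increasing order; each boundary shift on an adjacent strictly-inverted pair decreases this potential by exactly one (the standard adjacent-transposition fact) and does not increase $\JC$. Since the potential is a non-negative integer bounded by $\binom{m}{2}$, after finitely many shifts no adjacent inverted pair remains, hence the size sequence read in $\precsim$ order is monotone non-increasing; equivalently $\{\|\mathcal{N}(i)\|\}$ over $i=1,\dots,n$ is non-increasing, and the final partitioning is still optimal and consecutive. That is Corollary~\ref{size_decreasing_corollary}, and Corollary~\ref{ceiling_size_decreasing_corollary} follows immediately because $x\mapsto\lceil x/c_R\rceil$ is monotone.

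The step I expect to be the main obstacle is the cost comparison in the second paragraph, and the subtlety is precisely why the boundary must be shifted by the whole gap $q-p$ at once rather than one record at a time: a single-record boundary move can push a $\lceil\cdot/c_R\rceil$ term across a multiple of $c_R$ and strictly raise the cost, so the sortedness of $\CT$ alone would not be enough. Shifting by the full difference is exactly what makes the sign of $\delta$ (from sortedness) align with the sign of $\lceil q/c_R\rceil-\lceil p/c_R\rceil$ (from $p<q$). A minor point to keep honest is that the displayed identity uses the relaxed cost of Section~\ref{subsec:modeling_io} (the one the DP's $\textsf{CalCost}$ optimizes), which drops the outer $\lceil\cdot/b_S\rceil$ rounding; reinstating that rounding perturbs each side by at most an additive constant per affected pair, already absorbed into the $+m+n_R/c_R$ slack, so the conclusion is unchanged.
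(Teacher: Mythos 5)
Your proof is correct and follows essentially the same route as the paper's: both bubble-sort adjacent out-of-order partitions by an exchange that moves the larger partition to the left of the smaller one, preserving consecutiveness and never increasing the join cost. The only real difference is granularity --- the paper realizes each exchange as a sequence of record-level applications of Lemma~\ref{lem:swap}, whereas you perform the block boundary shift in one step and verify the aggregate cost change $\delta\cdot\left(\lceil q/c_R\rceil-\lceil p/c_R\rceil\right)\ge 0$ directly, which amounts to summing the individual swap gains.
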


\begin{proof}
	We use the optimal partitioning scheme derived from the consecutive property. 
	Suppose that we find two adjacent partitions that are not in the non-decreasing order, as shown in Figure~\ref{fig:size_decreasing_partitions_proof}. 
	We can apply \textbf{Lemma~\ref{lem:swap}} multiple times until we have the larger partition is completely placed before the smaller one. 
	We can repeat this process for every two adjacent partitions until all the partitions are decreasingly sorted by their size. 
	Note that, we always ensure that the join cost does not increase by Lemma~\ref{lem:swap} and thus the final partitioning scheme is still optimal. The proof of Corollary~\ref{ceiling_size_decreasing_corollary} follows a similar procedure as above. Proof completes.
	\begin{figure}[t]
		\includegraphics[width=0.48\textwidth]{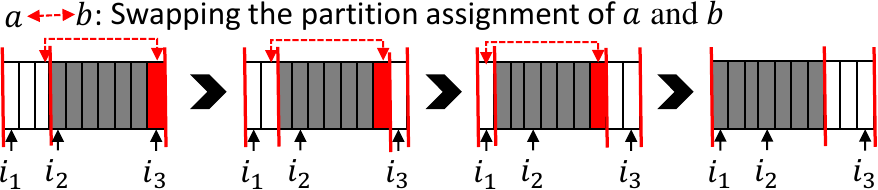}
		\vspace{-0.15in}
		\caption{When there are two adjacent partitions which do not follow the decreasing order, we can always apply \textbf{Lemma~\ref{lem:swap}} to swap elements. 
			For example, in the first step, we have $i_3 > i_2 - 1$ and $\|\mathcal{N}(i_3)\| > \|\mathcal{N}(i_2 - 1)\|$, and thus 
			$\left\lceil \frac{\|\mathcal{N}(i_3)\| }{c_R} \right\rceil \geq \left\lceil \frac{\|\mathcal{N}(i_2-1)\|}{c_R} \right\rceil$, we can swap $i_3$ and $i_2-1$. 
			We continue swapping elements until the indexes of all the elements in the larger partition are smaller than the other partition.}
		\label{fig:size_decreasing_partitions_proof}
		\vspace{-0.1in}
	\end{figure}
\end{proof}

\Paragraph{Divisible Property.} We prove the divisible property using Corollary~\ref{divisible_corollary}.
\begin{corollary}
	\label{divisible_corollary}
	There exists an optimal partitioning $P_{opt}$ where the size of all partitions except the largest partition, is divisible by $c_R$. 
\end{corollary}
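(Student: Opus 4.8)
The plan is to begin from an optimal partitioning $\P = \langle P_1, \dots, P_m \rangle$ that already satisfies the consecutive property and the strict size-monotone property, i.e., $\CT$ restricted to $P_1, \dots, P_m$ is a left-to-right concatenation of intervals with $|P_1| \ge |P_2| \ge \dots \ge |P_m|$; such a partitioning exists by Theorem~\ref{the:main} and Corollary~\ref{size_decreasing_corollary}. I would then make the non-divisible partitions divisible one at a time, scanning $j = m, m-1, \dots, 2$, and show that no step increases the join cost. The output is then still optimal, with $|P_2|, \dots, |P_m|$ all multiples of $c_R$ and only the largest partition $P_1$ absorbing whatever remainder is left.

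At stage $j$, if $|P_j|$ is a multiple of $c_R$ I do nothing; otherwise I set $t = c_R - (|P_j| \bmod c_R) \in \{1, \dots, c_R - 1\}$ and move the $t$ records of largest $\CT$ value currently held by $P_{j-1}$ into $P_j$. Because $P_{j-1}$ lies immediately to the left of $P_j$ on the sorted $\CT$ array, these $t$ records occupy the right end of $P_{j-1}$, so after the move both $P_{j-1}$ and $P_j$ are still intervals and the consecutive property is preserved throughout. Two further bookkeeping facts are easy to check: $|P_j| + t = \lceil |P_j| / c_R \rceil \cdot c_R$, so the enlarged $P_j$ is divisible by $c_R$ and keeps the same chunk count $\lceil |P_j|/c_R \rceil$; and since partitions only ever shrink except the one currently being enlarged --- which is never compared again with its right neighbour --- the comparisons needed at the next stage, namely $|P_{j-2}| \ge |P_{j-1}|$ and $\lceil |P_{j-2}|/c_R \rceil \ge \lceil |P_{j-1}|/c_R \rceil$, still hold.

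The heart of the argument is that each move is cost non-increasing. I would work with the cost in the form it appears in the dynamic program (Equation~\eqref{eq:calcost}), where the $\CT$-mass of a partition enters linearly. Writing $a = |P_{j-1}|$ and $b = |P_j|$ before the move, letting $w$ be the total $\CT$-mass of the $t$ moved records and $W_{j-1}$ that of $P_{j-1}$, and setting $\alpha = \lceil a/c_R \rceil$, $\beta = \lceil b/c_R \rceil$, $\alpha' = \lceil (a-t)/c_R \rceil$, only the terms of $P_{j-1}$ and $P_j$ change, and the change equals $(\alpha' - \alpha)\, W_{j-1} + (\beta - \alpha')\, w$. Since $0 < t < c_R$ we have $\alpha - 1 \le \alpha' \le \alpha$, the size-monotone property gives $\beta \le \alpha$, and $0 \le w \le W_{j-1}$; a two-line case split on whether $\alpha' = \alpha$ (then the change is $(\beta-\alpha)w \le 0$) or $\alpha' = \alpha - 1$ (then it is at most $-W_{j-1} + w \le 0$) finishes it.

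Finally I would handle the single degenerate possibility that $P_{j-1}$ holds fewer than $t$ records. Using $|P_j| > (\beta - 1) c_R$ together with $|P_{j-1}| \ge |P_j|$, a short count shows this forces $\beta = 1$ and hence $|P_{j-1}| + |P_j| < c_R$; in that event I simply merge $P_{j-1}$ into $P_j$, which is cost-neutral (the merged interval still has chunk count $1$) and reduces the number of partitions, so the induction continues on a strictly smaller instance. I expect the main obstacle to be precisely this cost-non-increase step done carefully --- identifying the two ways the donor partition's ceiling can move and confirming that no earlier repair destroys the monotone comparison a later repair relies on --- plus dispatching the degenerate merge; the rest is routine bookkeeping.
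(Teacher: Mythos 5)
Your proof is correct and follows essentially the same route as the paper's: start from a consecutive, size-ordered optimal partitioning and repair divisibility right-to-left by transferring the boundary records of each partition's left neighbour into it, checking that each transfer is cost non-increasing. Your write-up is in fact slightly tighter in two places the paper glosses over --- the explicit two-case bound $(\alpha'-\alpha)W_{j-1}+(\beta-\alpha')w\le 0$ replaces the paper's qualitative cost argument (which instead re-swaps whole partitions whenever the donor's ceiling drops below the recipient's), and you dispatch the degenerate case of a donor with fewer than $t$ records by merging, which the paper leaves implicit.
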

Note that \textbf{Corollaries ~\ref{divisible_corollary}} and \textbf{\ref{size_decreasing_corollary}} may give us two different partitioning strategies that have the same -- optimal -- cost.
In other words, an optimal solution might satisfy either one of the two or both. 
For example, consider $n = c_R \cdot (B-2) + r \; (0 < r < c_R )$. According to Corollary ~\ref{size_decreasing_corollary}, we will group the last $r$ records to form the smallest partition and the remaining records form $B-2$ partitions. However, The consecutive property shows that we need to group the first $r$ records together. 
Even though we may end up having a different partitioning scheme, the optimal cost remains the same, due to the consecutive property, Corollary ~\ref{divisible_corollary}, and ~\ref{size_decreasing_corollary}.
This explains why we use the weakly-ordered property (Corollary~\ref{ceiling_size_decreasing_corollary}) instead of the ordered one (Corollary~\ref{size_decreasing_corollary}) in our main paper.

\begin{figure}[ht]
	\includegraphics[width=0.48\textwidth]{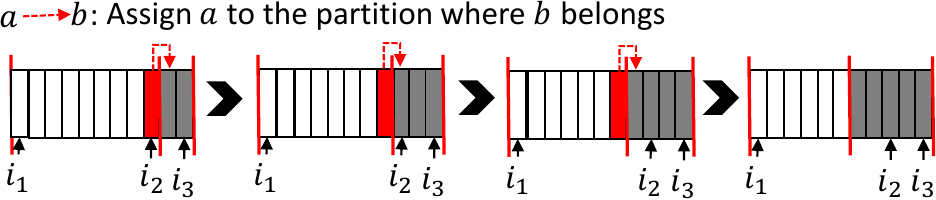}
	\vspace{-0.15in}
	\caption{When $\|\mathcal{P}_j\|$ is not divisible by $c_R$, we can move elements from $\mathcal{P}_{j-1}$ to $\mathcal{P}_j$ until $\left\lceil \frac{\|\mathcal{P}_{j-1}\|}{c_R}\right\rceil < \left\lceil \frac{\|\mathcal{P}_j\|}{c_R}\right\rceil$ or $\|\mathcal{P}_j\|$ is divisible by $c_R$. 
		In the above example, we note the partition containing $[i_2+1,i_3]$ as $\mathcal{P}_j$ and the previous partition containing $[i_1,i_2]$ as $\mathcal{P}_{j-1}$. Assuming $c_R = 5$, the above process stops when $\|\mathcal{P}_j\|=k\cdot 5)$ where $k \in \mathbf{N}$.}
	\label{fig:divisible_thereom_proof}
\end{figure}
\begin{proof}
	We aim to find a process that can obtain another optimal solution which satisfies both Corollaries \ref{divisible_corollary} and \ref{ceiling_size_decreasing_corollary}. 
	We note $\mathcal{P}_1,\mathcal{P}_2,...,\mathcal{P}_m$ as the partition sequence obtained from Corollary \ref{ceiling_size_decreasing_corollary}. 
	As such, these partitions are sorted by $\left\lceil \frac{\|\mathcal{P}_j\|}{c_R}\right\rceil$ in a descending order. 
	Now we select the largest $j (j \in [m])$ where $\|\mathcal{P}_j\|$ is not divisible by $c_R$. 
	If $j = 1$, we do nothing as the partitioning scheme already satisfies Corollary~\ref{divisible_corollary}. 
	Otherwise, we move adjacent elements from $\mathcal{P}_{j-1}$ to $\mathcal{P}_j$ until $\left\lceil \frac{\|\mathcal{P}_{j-1}\|}{c_R}\right\rceil < \left\lceil \frac{\|\mathcal{P}_j\|}{c_R}\right\rceil$  or $\|\mathcal{P}_j\|$ is a multiple of $c_R$. 
	When $\left\lceil \frac{\|\mathcal{P}_{j-1}\|}{c_R}\right\rceil < \left\lceil \frac{\|\mathcal{P}_j\|}{c_R}\right\rceil$, we swap elements according to the process mentioned in the proof of Corollary~\ref{size_decreasing_corollary} so that the decreasing order of $\left\lceil \frac{\|\mathcal{P}_j\|}{c_R}\right\rceil$ is maintained. 
	In other words, we obtain two consecutive partitions $\mathcal{P}^{new}_{j-1}, \mathcal{P}^{new}_{j}$  and we have $\|\mathcal{P}^{new}_{j-1}\| = \|\mathcal{P}^{old}_{j}\|, \|\mathcal{P}^{new}_{j}\|=\|\mathcal{P}^{old}_{j-1}\|$.
	We can repeat this process until the largest $j$ of $\|\mathcal{P}_j\|$ can be divisible by $c_R$. 
	Note that, when $\left\lceil \frac{\|\mathcal{P}_{j-1}\|}{c_R}\right\rceil < \left\lceil \frac{\|\mathcal{P}_j\|}{c_R}\right\rceil$, after we swap elements between $\mathcal{P}_{j-1}$ and $\mathcal{P}_j$, $\mathcal{P}_{j-1}$ will be selected to add elements from $\mathcal{P}_{j-2}$. 
	The above process already ensures that the final partition contains consecutive elements and the descending order elements based on their CT value. 
	In this process, when we move elements to $\mathcal{P}_j$, $\left\lceil \frac{\|\mathcal{P}_j\|}{c_R}\right\rceil$ does not change since $\|\mathcal{P}_j\|$ is not divisible by $c_R$. 
	Therefore, after we move one element $i$ to $\mathcal{P}_j$, all the elements from $\mathcal{P}_j$ contribute the same cost and all the elements (except $i$) from $\mathcal{P}_{j-1}$ have equivalent or even lower cost since $\|\mathcal{P}_{j-1}\|$ decreases. 
	For element $i$, the cost cannot increase since we maintain the descending order of $\left\lceil \frac{\|\mathcal{P}_j\|}{c_R}\right\rceil$. Specifically, when we move element $i$ from $\mathcal{P}_{j-1}$ to $\mathcal{P}_j$, $\left\lceil \frac{\|\mathcal{N}(i)\|}{c_R}\right\rceil$ must decrease or remain the same. Proof completes. 
\end{proof}

\end{document}